\newcommand{\benefit}[2]{\Delta(#1,#2)}
\newcommand{\OPT}{\mathrm{OPT}}
\newcommand{\DL}{\mathrm{DL}}
\newcommand{\IP}{\mathrm{IP}}
\newcommand{\LP}{\mathrm{LP}}
\newcommand{\subtree}[1]{T_{#1}}
\newcommand{\suchthat}{\, | \, }
\newcommand{\pruned}[1]{\widehat{#1}}
\newcommand{\fancybox}[1]
{ \begin{center}\noindent\psframebox*[framesep=7pt]{\psframebox[boxsep=true,framearc=0.05,framesep=9pt]{#1}} \end{center} }
\newcommand{\collection}{\mathcal}
\newcommand{\myalgo}{\mathcal}
\newcommand{\vf}[1]{#1}
\newcommand{\mf}[1]{#1}
\newcommand{\vXv}[2]{\vf{#1} \cdot \vf{#2}}
\newcommand{\mXv}[2]{\mf{#1} \vf{#2}}
\begin{document}

\ifthenelse{\not \boolean{journal}}
{\title[Lagrangian Relaxation and Partial Cover]{Lagrangian Relaxation and Partial Cover \\ (Extended Abstract)}}
{\title[Lagrangian Relaxation and Partial Cover]{Lagrangian Relaxation and Partial Cover }}

\author[lab1]{J. Mestre}{Juli\'{a}n Mestre}
\address[lab1]{Max-Planck-Institute f\"{u}r Informatik, Saarbr\"{u}cken, Germany.}
\email{jmestre@mpi-inf.mpg.de}  

\thanks{Supported by NSF Award CCF 0430650, a University of Maryland Dean's Dissertation Fellowship, and partly by an Alexander von Humboldt Fellowship.  }

\keywords{Lagrangian Relaxation, Partial Cover, Primal-Dual Algorithms}
\subjclass{G.2.1}


\begin{abstract}

Lagrangian relaxation has been used extensively in the design of approximation algorithms. This paper studies its strengths and limitations when applied to Partial Cover.

We show that for Partial Cover in general no algorithm that uses Lagrangian relaxation and a Lagrangian Multiplier Preserving (LMP) $\alpha$-approximation as a black box can yield an approximation factor better than~$\frac{4}{3} \alpha$. This matches the upper bound given by K\"{o}nemann \textit{et al.} (\emph{ESA 2006}, pages 468--479).

Faced with this limitation we study a specific, yet broad class of covering problems: Partial Totally Balanced Cover. By carefully analyzing the inner workings of the LMP algorithm we are able to give an almost tight characterization of the integrality gap of the standard linear relaxation of the problem. As a consequence we obtain improved approximations for the Partial version of Multicut and Path Hitting on Trees, Rectangle Stabbing, and Set Cover with $\rho$-Blocks.
\end{abstract}

\maketitle


\section{Introduction}

\ifthenelse{\boolean{journal}}
{Lagrangian relaxation has been used extensively in the design of approximation algorithms for a variety of problems such as TSP \cite{HK70,HK71}, $k$-MST \cite{G96,AK00,CRW01,G05}, partial vertex cover \cite{H98}, $k$-median \cite{JV01,CG99,ARS03}, MST with degree constraints \cite{KR02} and budgeted MST \cite{RG96}.}{Lagrangian relaxation has been used extensively in the design of approximation algorithms for a variety of problems such as $k$-MST \cite{G96,CRW01,G05}, $k$-median \cite{JV01,CG99}, MST with degree constraints \cite{KR02} and budgeted MST \cite{RG96}.}

 In this paper we study the strengths and limitations of Lagrangian relaxation applied to the Partial Cover problem. Let $\collection{S}$ be collection of subsets of a universal set $U$ with cost $c:\collection{S}\rightarrow R_+$ and profit $p:U \rightarrow R_+$, and let $P$ be a target coverage parameter. A set $\collection{C} \subseteq \collection{S}$ is a \emph{partial cover} if the overall profit of elements covered by $\collection{C}$ is at least $P$. The objective is to find a minimum cost partial cover.

The high level idea behind Lagrangian relaxation is as follows. In an IP formulation for Partial Cover, the constraint enforcing that at least $P$ profit is covered is \emph{relaxed}: The constraint is multiplied by a parameter $\lambda$ and lifted to the objective function. This relaxed IP corresponds, up to a constant factor, to the prize-collecting version of the underlying covering problem in which there is no requirement on how much profit to cover but a penalty of $\lambda\, p(i)$ must be paid if we leave element $i \in U$ uncovered. An approximation algorithm for the prize-collecting version having the Lagrangian Multiplier Preserving (LMP) property\footnote{The definition of the LMP property is outlined in Section 2.} is used to obtain values $\lambda_1$ and $\lambda_2$ that are close together for which the algorithm produces solutions $\collection{C}_1$ and $\collection{C}_2$ respectively. These solutions are such that $\collection{C}_1$ is inexpensive but unfeasible (covering less than $P$ profit), and $\collection{C}_2$ is feasible (covering at least $P$ profit) but potentially very expensive. Finally, these two solutions are combined to obtain a cover that is both inexpensive and feasible.

Broadly speaking there are two ways to combine $\collection{C}_1$ and $\collection{C}_2$. One option is to treat the approximation algorithm for the prize-collecting version as a black box, only making use of the LMP property in the analysis. Another option is to focus on a particular LMP algorithm and exploit additional structure that it may offer. Not surprisingly, the latter approach has yielded better approximation guarantees. For example, for $k$-median compare the 6-approximation of \citet{JV01} to the 4-approximation of \citet{CG99}; for $k$-MST compare the 5-factor to the 3-factor approximation due to \citet{G96}.

The results in this paper support the common belief regarding the inherent weakness of the black-box approach. First, we show a lower bound on the approximation factor achievable for Partial Cover in general using Lagrangian relaxation and the black-box approach that matches the recent upper bound of \citet{KPS06}. To overcome this obstacle, we concentrate on Kolen's algorithm for Prize-Collecting Totally Balanced Cover \cite{K82}. By carefully analyzing the algorithm's inner workings we identify structural similarities between $\collection{C}_1$ and $\collection{C}_2$, which we later exploit when combining the two solutions. As a result we derive an almost tight characterization of the integrality gap of the standard linear relaxation for Partial Totally Balanced Cover. This in turn implies improved approximation algorithms for a number of related problems.

\subsection{Related Work}

Much work has been done on covering problems because of both their simple and elegant formulation, and their pervasiveness in different application areas. In its most general form the problem, also known as Set Cover, cannot be approximated within $(1 -\epsilon) \ln |U|$ unless $NP \subseteq \mathrm{DTIME}(|U|^{\log \log |U|})$~\cite{F98}. Due to this hardness, easier, special cases have been studied. 

A general class of covering problems that can be solved efficiently are those whose element-set incidence matrix is balanced. A $0,1$ matrix is \emph{balanced} if it does not contain a square submatrix of odd order with row and column sums equal to 2. These matrices were introduced by Berge \cite{B72} who showed that if $A$ is balanced then the polyhedron $\{ \vf{x}\!\geq\! \vf{0} :  \mXv{A}{x} \!\geq \!\vf{1} \}$ is integral. A $0,1$ matrix is \emph{totally balanced} if it does not contain a square submatrix with row and column sums equal to 2 and no identical columns. Kolen \cite{K82} gave a simple primal-dual algorithm that solves optimally the covering problem defined by a totally balanced matrix. A $0,\pm 1$ matrix is \emph{totally unimodular} if every square submatrix has determinant 0 or $\pm 1$. Although totally balanced and totally unimodular matrices are subclasses of balanced matrices, the two classes are neither disjoint nor one is included in the other. 

Beyond this point, even minor generalizations can make the covering problem hard. For example, consider the \emph{vertex cover} problem: Given a graph $G=(V,E)$ we are to choose a minimum size subset of vertices such that every edge is incident on at least one of the chosen vertices. If $G$ is bipartite, the element-set incidence matrix for the problem is totally unimodular; however, if $G$ is a general graph the problem becomes NP-hard \cite{Karp72}. Numerous approximation algorithms have been developed for vertex cover~\cite{Hoc-book}. The best known approximation factor for general graphs is \mbox{$2 - o(1)$} \cite{BYE85,H02,Karakostas05}; yet, after 25 years of study, the best constant factor approximation for vertex cover remains 2 \cite{C83,BYE81,H82}. This lack of progress  has led researchers to seek generalizations of vertex cover that can still be approximated within twice of optimum. One such generalization is the \emph{multicut} problem on trees: Given a tree $T$ and a collection of pairs of vertices, a cover is formed by a set of edges whose removal separates all pairs. The problem was first studied by \citet{GVY97} who gave an elegant primal-dual 2-approximation. 

A notable shortcoming of the standard set cover formulation is that certain hard-to-cover elements, also known as \emph{outliers} \cite{CKMN01}, can render the optimal solution very expensive. 
Motivated by the presence of outliers, the unit-profit partial version calls for a collection of sets covering not all but a specified number $k$ of elements. Partial Multicut, a.k.a. $k$-Multicut, was recently studied independently by \citet{LS05} and by \citet{GNS06}, who gave a $\frac{8}{3} + \epsilon$ approximation algorithm. This scheme was generalized by \citet{KPS06} who showed how to design a $\frac{4}{3} \alpha + \epsilon$ approximation for any covering problem using Lagrangian relaxation and an $\alpha$-LMP approximation as a black box. (Their algorithm runs in time polynomial on $|U|, |\collection{S}|^{\frac{1}{\epsilon}}$ and the running time of the $\alpha$-LMP approximation.)

\subsection{Our Results and Outline of the Paper}

Section~\ref{section:lowerbound} shows that for Partial Cover in general no algorithm that uses Lagrangian relaxation and an $\alpha$-LMP approximation as a black box can yield an approximation factor better than $\frac{4}{3} \alpha$.
In Section~\ref{section:P-TBC} we give an almost tight characterization of the integrality gap of the standard LP for Partial Totally Balanced Cover, settling a question posed by \citet{GNS06}. Our approach is based on Lagrangian relaxation and Kolen's algorithm. We prove that $\IP \leq \left(1+\frac{1}{3^{k-1}}\right) \LP + k\, c_{\max}$ for any $k\! \geq\! 1$, where $\IP$ and $\LP$ are the costs of the optimal integral and fractional solutions respectively and $c_{\max}$ is the cost of the most expensive set in the instance. The trade-off between additive and multiplicative error is not an artifact of our analysis or a shortcoming of our approach. On the contrary, this is precisely how the integrality gap behaves. More specifically, we show a family of instances where $\IP > \left(1+ \frac{1}{3^{k-1}}\right) \LP + \frac{k}{2} c_{\max}$. In other words, there is an unbounded additive gap in terms of $c_{\max}$ but as it grows the multiplicative gap narrows exponentially fast.

Finally,
\ifthenelse{\boolean{journal}}{in Section~\ref{sec:applications}}{} we show how the above result can be applied, borrowing ideas \mbox{from \cite{GIK02,HS05a,GNS06}}, to get a $\rho + \epsilon$ approximation or a quasi-polynomial time $\rho$-approximation for covering problems that can be expressed with a suitable combination of $\rho$ totally-balanced matrices. This translates into improved approximations for a number of problems: a $2 + \epsilon$ approximation for the Partial Multicut on Trees \cite{LS05,GNS06}, a $4 + \epsilon$  approximation for Partial Path Hitting on Trees \cite{PS06}, a 2-approximation for Partial Rectangle Stabbing \cite{GIK02}, and a $\rho$ approximation for Partial Set-Cover with $\rho$-blocks \cite{HS05a}. In addition, the $\epsilon$ can be removed from the first two approximation guarantees if we allow quasi-polynomial time.
It is worth noting that prior to this work, the best approximation ratio for all these problems could be achieved with the framework of \citet{KPS06}. In each case our results improve the approximation ratio by a $\frac{4}{3}$ multiplicative factor. 
\ifthenelse{\not \boolean{journal}}{Due to lack of space these results only appear in the full version\footnote{Full version available at {\tt http://arxiv.org/abs/0712.3936}} of the paper.}{}

\section{Lagrangian relaxation} \label{section:LR}

Let $\collection{S} = \{1, \ldots, m\}$ be a collection of subsets of a universal set $U =\{1, \ldots, n\}$. Each set has a cost specified by $\vf{c} \in R_+^m$, and each element has a profit specified by $\vf{p} \in R_+^n$. Given a target coverage $P$, the objective of the Partial Cover problem is to find a minimum cost solution $\collection{C} \subseteq \collection{S}$ such that $p(\collection{C}) \geq P$, where the notation $p(\collection{C})$ denotes the overall profit of elements covered by $\collection{C}$. The problem is captured by the IP below. Matrix $A = \{a_{ij}\}\in \{0,1\}^{n \times m} $ is an element-set incidence matrix, that is, $a_{ij}=1$ if and only if element $i \in U$ belongs to set $j \in \collection{S}$; variable $x_j$ indicates whether set $j$ is chosen in the solution $\collection{C}$; variable $r_i$ indicates whether element $i$ is left uncovered.

Lagrangian relaxation is used to get rid of the constraint bounding the profit of uncovered elements to be at most $p(U) - P$. The constraint is multiplied by the parameter $\lambda$, called Lagrange Multiplier, and is lifted to the objective function. The resulting IP corresponds, up to the constant $\lambda\, (p(U)-P)$ factor in the objective function, to the prize-collecting version of the covering problem, where the penalty for leaving element $i$ uncovered is $\lambda p_i$.
\begin{center}
\vspace{-0.5em}
\begin{minipage}[t]{5cm}
\begin{gather*}
\begin{array}{r@{\hspace{.7ex}}l}
\multicolumn{2}{c}{\displaystyle \min \ \vXv{c}{x}} \\[0.4cm]
\mXv{A}{x} + \mXv{I}{r} \geq & \vf{1} \hspace{0.5cm}  \\[0.2cm]
\vXv{p}{r} \leq & p(U)-P \\[0.2cm]
r_i, x_j \in & \{ 0, 1\} \\[0.4cm]
\end{array}
\end{gather*}
\end{minipage}
\begin{pspicture}(0,1.75)(2.5,1.75)
\psline[linewidth=1.25pt,arrowinset=0,arrowlength=1]{->}(2.5,0)
\rput(1.25,-.4){\parbox{2cm}{\center \footnotesize Lagrangian\\ Relaxation}}
\end{pspicture}
\begin{minipage}[t]{7cm}
\begin{gather*}
\begin{array}{r@{\hspace{.7ex}}l}
\multicolumn{2}{c}{\displaystyle \min\ \vXv{c}{x} +  \lambda \vXv{p}{r} - \lambda\, (p(U)-P)} \\[0.4cm]
\hspace{4em}\mXv{A}{x} + \mXv{I}{r} \geq & \vf{1} \\[0.2cm]
r_i, x_j \in & \{0,1\}
\end{array}
\end{gather*}
\end{minipage}
\end{center}

Let $\OPT$ be the cost of an optimal partial cover and $\mbox{OPT-PC}(\lambda)$ be the cost of an optimal prize-collecting cover for a given $\lambda$. Let $\myalgo{A}$ be an $\alpha$-approximation for the prize-collecting variant of the problem. Algorithm $\myalgo{A}$ is said to have the Lagrangian Multiplier Preserving (LMP) property if it produces a solution $\collection{C}$ such that
\begin{equation}
c(\collection{C}) + \alpha \,\lambda  \big(p(U)-p(\collection{C})\big)  \leq \alpha\, \mbox{OPT-PC}(\lambda) \label{eqn:LMP}.
\end{equation}

Note that $\mbox{OPT-PC}(\lambda) \leq \OPT + \lambda\, (p(U)- P)$. Thus, 
\begin{equation}
c(\collection{C})  \leq \alpha \Big( \OPT +  \lambda\, \big(p(\collection{C}) - P\big) \Big). \label{eqn:cost-C}
\end{equation}

Therefore, if we could find a value of $\lambda$ such that $\collection{C}$ covers exactly $P$ profit then $\collection{C}$ is $\alpha$-approximate. However, if $p(\collection{C})<P$, the solution is not feasible, and if $p(\collection{C})>P$, equation \eqref{eqn:cost-C} does not offer any guarantee on the cost of $\collection{C}$. Unfortunately, there are cases where no value of $\lambda$ produces a solution covering exactly $P$ profit. Thus, the idea is to use binary search to find two values $\lambda_1$ and $\lambda_2$ that are close together and are such that $\myalgo{A}(\lambda_1)$ covers less, and $\myalgo{A}(\lambda_2)$ covers more than $P$ profit. The two solutions are then combined in some fashion to produce a feasible cover.

\section{Limitations of the black-box approach} \label{section:lowerbound}

A common way to combine the two solutions returned by the $\alpha$-LMP is to treat the algorithm as a black box, solely relaying on the LMP property~\eqref{eqn:LMP} in the analysis. More formally, an algorithm for Partial Cover that uses Lagrangian relaxation and an $\alpha$-LMP approximation $\myalgo{A}$ as a black box is as follows. First, we are allowed to run $\myalgo{A}$ with as many different values of $\lambda$ as desired; then, the solutions thus found are combined to produce a feasible partial cover. No computational restriction is placed on the second step, except that only sets returned by $\myalgo{A}$ may be used.

\begin{theorem} \label{theorem:lowerbound} In general, the Partial Cover problem cannot be approximated better than $\frac{4}{3} \alpha$ using Lagrangian relaxation and an $\alpha$-LMP algorithm $\myalgo{A}$ as a black box.
\end{theorem}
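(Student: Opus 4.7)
The plan is to exhibit, for any $\alpha \geq 1$, a Partial Cover instance together with an adversarial $\alpha$-LMP approximation $\myalgo{A}$ whose vocabulary---the union of the sets returned across all queries---forces every feasible partial cover to cost at least $\tfrac{4}{3}\alpha\cdot\OPT$. I would design the instance to hide the optimum from $\myalgo{A}$: take a universe $U$ of $11$ unit-profit elements with target $P = 10$, an ``ideal'' set $s^*$ of cost $1$ covering exactly $P$ elements (so $\OPT = 1$), an ``almost-feasible'' set $s_1$ of cost $\tfrac{2}{3}\alpha$ covering $P - 1$ elements contained in $s^*$, and a ``patch'' set $d$ of cost $\tfrac{2}{3}\alpha$ covering the two elements of $U \setminus s_1$.

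Next, I would define $\myalgo{A}$ so that it never returns $s^*$: on small $\lambda$ it outputs $\emptyset$; on an intermediate range it outputs the infeasible $\{s_1\}$; once $\lambda$ reaches the critical value $\lambda^* = 1/3$ it outputs the feasible $\{s_1, d\}$. The key verification is the LMP inequality~\eqref{eqn:LMP} at $\lambda^*$, where $\mbox{OPT-PC}(\lambda^*) = \tfrac{4}{3}$ is attained by $\{s^*\}$ leaving one element uncovered, and with the symmetric choice of costs the bound is simultaneously tight for $\{s_1\}$ and $\{s_1, d\}$; outside $\lambda^*$ the verification reduces to routine case analysis. Since the vocabulary of $\myalgo{A}$ is exactly $\{s_1, d\}$ and the only feasible partial cover inside it is $\{s_1, d\}$ itself---neither $\{s_1\}$ with profit $9$ nor $\{d\}$ with profit $2$ reaches $P = 10$---any black-box algorithm must pay $\tfrac{4}{3}\alpha$, proving the claim.

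The main obstacle is arranging that $\myalgo{A}$ may legitimately refuse to return $s^*$ for every $\lambda$, despite the LMP property capping the cost of a returned cover in terms of $\mbox{OPT-PC}(\lambda)$. For $\lambda$ far from $\lambda^*$ the trivial outputs $\emptyset$ or $\{s_1\}$ comfortably satisfy the inequality, but at the critical $\lambda$ both $\{s_1\}$ and $\{s_1, d\}$ must exactly saturate the bound given by $\{s^*\}$ if the adversary is to avoid revealing $s^*$. The symmetric choice $c(s_1) = c(d) = \tfrac{2}{3}\alpha$ together with $p(s_1) = P - 1$ and $p(d) = 2$ produces this double tightness, and is precisely what drives the matching constant $\tfrac{4}{3}$.
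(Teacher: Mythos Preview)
Your argument is correct: the three-set instance with $s^*$, $s_1$, $d$ and the adversarial $\myalgo{A}$ you describe does satisfy the $\alpha$-LMP inequality for every $\lambda$ (the case split at $\lambda^* = 1/3$ works out exactly as you say, and the boundary cases between $\emptyset$ and $\{s_1\}$ can be handled by switching at $\lambda = 2/27$), and the only feasible cover in the vocabulary $\{s_1,d\}$ costs $\tfrac{4}{3}\alpha$. So the theorem as stated follows.

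The paper takes a more elaborate route. It builds a parametrized family with $3q$ sets ($A_1,\dots,A_q$, $B_1,\dots,B_q$, $O_1,\dots,O_q$) on $\Theta(q^3)$ elements, where the $O$-sets play the role of your $s^*$, the $A$-sets that of $s_1$, and the $B$-sets that of $\{s_1,d\}$; the adversarial $\myalgo{A}$ returns only $A$-sets or $B$-sets, and one then argues that any feasible cover using only these costs at least $\tfrac{4}{3}\alpha$. Your construction is considerably more elementary and makes the double-tightness phenomenon at $\lambda^*$ completely transparent. What the paper's parametrization buys is robustness: every set costs $O(1/q)$, so $c_{\max}/\OPT \to 0$ as $q \to \infty$. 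This rules out the standard escape hatch of allowing an additive $O(c_{\max})$ term (or assuming $c_{\max} \le \epsilon\,\OPT$), which the paper explicitly flags as a common assumption in the literature. In your instance $c_{\max} = \Theta(\OPT)$, so a black-box scheme that outputs $\{s_1\}$ plus one extra set ``for free'' would beat $\tfrac{4}{3}\alpha$; your construction proves the theorem as literally stated but not this stronger statement. The paper's instance also has enough combinatorial structure to be embedded into Partial Totally Unimodular Cover (their Appendix), which your minimal example may not directly support.
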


\def\horizontalSet#1{
  \psframe[framearc=.5](-1,0)(14,2)
  \rput(2,1){\cluster}
  \rput(5,1){\cluster}
  \rput(11,1){\cluster}
  \psdot[dotsize=2pt](13,1)
  \psdot[dotsize=2pt](0,1)
  \rput(-2.5,1){$B_{#1}$}
}

\def\verticalSet#1
{
   \psframe[framearc=0.5](0,0)(2,13)
   \rput(1,14.5){$A_{#1}$}
}

\def\cluster{ \\
\psset{dotsize=1pt}\\
\psset{unit=2pt}\\ 
\multirput(-2,0)(1,0){5}{\psdots(0,1)(0,0)(0,-1)}\\
\multirput(-1,0)(1,0){3}{\psdots(0,2)(0,-2)}
}

\begin{floatingfigure}[r]{5cm}
\psset{unit=7pt}
\begin{pspicture}(-3,-3)(14,15)
\rput(0,9){\horizontalSet{1}}
\rput(0,6){\horizontalSet{2}}
\rput(0,0){\horizontalSet{q}}
\rput(1,-1){\verticalSet{1} }
\rput(4,-1){\verticalSet{2} }
\rput(10,-1){\verticalSet{q} }
\rput(8,13.5){$\ldots$}
\rput{90}(-2.6,4){$\ldots$}
\end{pspicture}

\end{floatingfigure}

Let $A_1,\ldots A_q$ and $B_1, \ldots B_q$ be sets as depicted on the right. For each $i$ and $j$ the intersection $A_i \cap B_j$ consists of a cluster of $q$ elements. There are $q^2$ clusters. Set $A_i$ is made up of $q$ clusters; set $B_i$ is made up of $q$ clusters and two additional elements (the leftmost and rightmost elements in the picture.) Thus $|A_i| = q^2$ and $|B_i| = q^2 + 2$. 
In addition, there are sets $O_1, \ldots, O_q$, which are not shown in the picture. Set $O_i$ contains one element from each cluster and the leftmost element of $B_i$. Thus $|O_i| = q^2 + 1$. 
The cost of $O_i$ is $\frac{1}{q}$, the cost of $A_i$ is $\frac{2\,\alpha}{3\,q}$, and the cost of $B_i$ is $\frac{4\,\alpha}{3\,q}$.
Every element has unit profit and the target coverage is $P = q^3 + q$. 
It is not hard to see that $O_1, \ldots, O_q$ is an optimal partial cover with a cost of 1. \ifthenelse{\boolean{journal}}
{Furthermore, for any value of $\lambda$ the optimal prize-collecting cover uses sets of one kind.

\begin{lemma} \label{lemma:structure-prize-collecting} For all values of $\lambda$, the optimal prize-collecting cover in our instance is either the empty cover or $A_1,\ldots,A_q$ or $B_1, \ldots, B_q$ or $O_1, \ldots, O_q$.
\end{lemma}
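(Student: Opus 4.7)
The plan is to enumerate all covers up to the symmetries of the instance and then eliminate all but the four candidates. A cover $\collection{C}$ is captured for our purposes by the counts $(a,b,o,t)$ where $a=|\collection{C}\cap\{A_1,\dots,A_q\}|$, $b$ and $o$ are defined analogously, and $t=|\{j:B_j,O_j\in\collection{C}\}|$ records how many indices are simultaneously chosen as a $B$ and as an $O$. The overlap $t$ matters only through the leftmost outer element of $B_j$, which is the single element shared by $B_j$ and $O_j$ (no $A_i$ contains any outer element, and the $O_k$'s pick disjoint representatives within each cluster). A routine count gives
\[
|U|-p(\collection{C}) \;=\; (q-a)(q-b)(q-o) + 2q - 2b - o + t,
\]
while $c(\collection{C})$ is a linear function of $(a,b,o)$ only. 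Since $t$ appears with nonnegative coefficient $\lambda$ in the prize-collecting objective, I would set it to its minimum feasible value $t^\star(b,o)=\max(0,b+o-q)$.

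With $t$ eliminated, the objective $g(a,b,o)$ is linear in $a$ for every fixed $(b,o)$, so the integer-optimal $a$ is $0$ or $q$. For fixed $a$, the function $g$ is bilinear in $(b,o)$ on each of the half-planes $\{b+o\le q\}$ and $\{b+o\ge q\}$, with cross-term coefficient $\lambda(q-a)\ge 0$. A bilinear function whose cross term is nonzero has only a saddle point in its interior, so its minimum on any bounded polygonal region is attained on the boundary. Restricted to an axis-parallel edge $g$ is linear; restricted to the diagonal $b+o=q$ a short calculation shows it is a concave quadratic in the edge parameter. In either case the edge minimum is attained at a vertex of $[0,q]^2$, and combined with linearity in $a$ the integer minimum of $g$ lies in $\{0,q\}^3$.

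Eight corners remain. Plugging back into the cost and coverage formulas, each of $(q,q,0)$, $(q,0,q)$, $(0,q,q)$ and $(q,q,q)$ is strictly dominated by one of the other corners, the extra sets adding positive cost while reducing the uncovered count by zero. The four survivors $(0,0,0)$, $(q,0,0)$, $(0,q,0)$ and $(0,0,q)$ correspond exactly to the empty cover, all $A_i$'s, all $B_j$'s and all $O_k$'s, which is what the lemma claims. The main technical subtlety in this argument is the kink that $t^\star$ introduces along $b+o=q$: it is precisely to handle this kink that the two half-planes must be analyzed separately and that the concave-quadratic behaviour on the diagonal edge is needed to rule out an interior minimum there.
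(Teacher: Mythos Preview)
Your argument is correct and takes a genuinely different route from the paper's. The paper argues combinatorially via marginal benefits: fixing the $B$- and $O$-choices, all $A$-sets have identical marginal benefit, so either all or none are optimal; fixing the $A$- and $O$-choices, the $B$-sets fall into two marginal-benefit classes, so the optimum is none/all/complement-of-$O$'s, and the complement option is then ruled out; this forces an all-or-nothing choice for each set type, after which the four surviving configurations are listed directly. Your approach instead exploits the symmetry to reduce to the parameters $(a,b,o,t)$, derives the closed-form uncovered count $(q-a)(q-b)(q-o)+2q-2b-o+t$, and then runs a clean optimization argument (linearity in $a$, bilinearity in $(b,o)$ on each half-plane, concavity on the diagonal) to push the minimum to the corners of $\{0,q\}^3$; the final domination check is the same in spirit as the paper's last paragraph. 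Your method is more systematic and would generalize more readily to variants of the construction, while the paper's marginal-benefit argument is shorter and avoids the case split across the kink at $b+o=q$. One small omission: when the cross-term coefficient $\lambda(q-a)$ vanishes (i.e., $\lambda=0$ or $a=q$) your saddle-point sentence does not literally apply, but in that case $g$ is separately linear in $b$ and $o$ and the corner conclusion is immediate, so this is easily patched.
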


\begin{proof}
Suppose that some subsets of the $B$-sets and the $O$-sets have already been chosen. Every $A$-set has the same marginal benefit (the penalty of elements not-yet-covered by the set minus its cost) the independent of the other $A$-sets. Thus, in an optimal solution either all the $A$-sets are chosen or none is.

Now suppose that the $O$-sets and the $A$-sets have already been chosen. In this case there are two marginal benefits for the $B$-sets, depending on whether an $O$-set already covers the leftmost element of the $B$-set or not. Thus, an optimal strategy for the $B$-sets is either to choose none, all, or the complement of the $O$-sets, i.e., $B_i$ is chosen if and only if $O_i$ is not chosen. A solution where $B$-sets and $O$-sets complement each other is always worse than either choosing all the $O$-sets and no $B$-sets, or vice versa. Thus, in an optimal solution either all the $B$-sets are chosen or none is. It follows that the same holds for the $O$-sets; i.e., they are either all in or all out.

Notice that if the $B$-sets are chosen then there is no reason to choose any of the remaining sets. If the $B$-sets are not chosen and the $O$-sets are chosen then there is no reason to choose the $A$-sets. The only possibility left is to choose only the $A$-sets, or the empty cover.
\end{proof}

}{}

The $\alpha$-LMP approximation algorithm we use has the unfortunate property that it never returns sets from the optimal solution.

\begin{lemma} \label{lemma:naughty-LMP} There exists an $\alpha$-LMP approximation $\myalgo{A}$ that for the above instance and any value of $\lambda$ outputs either $\emptyset$ or $A_1, \ldots, A_q$ or $B_1, \ldots, B_q$.
\end{lemma}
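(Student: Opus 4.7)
My approach is to combine Lemma~\ref{lemma:structure-prize-collecting} with a direct verification of the LMP inequality~\eqref{eqn:LMP} on the four candidate prize-collecting solutions. By that lemma, for every $\lambda$ the optimal prize-collecting cover is one of $\emptyset$, $A_1,\ldots,A_q$, $B_1,\ldots,B_q$, or $O_1,\ldots,O_q$. I define $\mathcal{A}$ so that whenever the prize-collecting optimum is not the $O$-sets, the algorithm simply returns that optimum; in those cases~\eqref{eqn:LMP} is immediate because $c(\mathcal{C}) + \alpha\lambda(p(U)-p(\mathcal{C})) \leq \alpha\big(c(\mathcal{C}) + \lambda(p(U)-p(\mathcal{C}))\big) = \alpha \cdot \mbox{OPT-PC}(\lambda)$ for any $\alpha \geq 1$.

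The only delicate case is when $O_1,\ldots,O_q$ is the prize-collecting optimum; here I make $\mathcal{A}$ output $A_1,\ldots,A_q$ if $\lambda \leq \tfrac{1}{3q}$ and $B_1,\ldots,B_q$ otherwise. To check~\eqref{eqn:LMP} I record the relevant quantities for the instance: $p(U) = q^3 + 2q$, $p(O_1 \cup \cdots \cup O_q) = q^3 + q$, so $\mbox{OPT-PC}(\lambda) = 1 + \lambda q$; also $c(A_1 \cup\cdots\cup A_q) = \tfrac{2\alpha}{3}$ with uncovered profit $2q$, and $c(B_1 \cup\cdots\cup B_q) = \tfrac{4\alpha}{3}$ with uncovered profit $0$. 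Substituting into~\eqref{eqn:LMP}, the condition for $A$-sets reduces to $\tfrac{2\alpha}{3} + 2\alpha\lambda q \leq \alpha(1 + \lambda q)$, i.e.\ $\lambda \leq \tfrac{1}{3q}$; the condition for $B$-sets reduces to $\tfrac{4\alpha}{3} \leq \alpha(1 + \lambda q)$, i.e.\ $\lambda \geq \tfrac{1}{3q}$. These are precisely the ranges used, so the LMP property holds throughout and no $O_i$ is ever returned.

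The main obstacle is really just identifying the correct crossover threshold between the $A$-sets and the $B$-sets; once one has it, every verification is a single-line calculation. The threshold $\tfrac{1}{3q}$ is forced by the construction: the costs $\tfrac{2\alpha}{3q}$ and $\tfrac{4\alpha}{3q}$ of the individual $A_i$ and $B_i$ are tuned so that the two ranges $\lambda \leq \tfrac{1}{3q}$ and $\lambda \geq \tfrac{1}{3q}$ together cover the entire interval of $\lambda$ on which the $O$-sets are prize-collecting optimal, making the algorithm well defined for every $\lambda \geq 0$.
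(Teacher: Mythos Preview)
Your proof is correct and follows essentially the same approach as the paper's own proof: both invoke Lemma~\ref{lemma:structure-prize-collecting}, return the optimal prize-collecting cover whenever it is $\emptyset$, $A$, or $B$, and in the $O$-interval switch between the $A$-sets and the $B$-sets at the threshold $\lambda = \tfrac{1}{3q}$, verifying the LMP inequality against $\alpha(1+\lambda q)$ in each sub-case. The paper additionally draws the picture of which of the four candidates is optimal as a function of $\lambda$, but this is expository and not needed for the argument.
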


\ifthenelse{\boolean{journal}}
{
\begin{proof}
Let us denote each of the four alternatives in Lemma~\ref{lemma:structure-prize-collecting} by $E$, $A$, $B$ and $O$. The prize-collecting cost of these covers is $(q^3 + 2q) \lambda$, $\frac{2}{3} \alpha + 2 q\, \lambda$, $\frac{4}{3} \alpha$ and $1 + q\, \lambda$ respectively. For a fixed value of $\lambda$, the minimum of these four quantities corresponds to the cost of the optimal prize-collecting cover. The line below shows which solution is optimal as a function of $\lambda$.

\begin{center}
  \begin{pspicture}(0,-1)(11,1)
    \psset{arrows=|-}
    \psset{labelsep=10pt}
    \uput[d](0,0){$0$}
    \psline(0,0)(2,0)
    \uput[u](1,0){$E$}
    \uput[d](2,0){$\frac{2\alpha}{3q^3}$}
    \psline(2,0)(4.5,0)
    \uput[u](3.25,0){$A$}
    \uput[d](4.5,0){$\frac{3- 2\alpha}{3 q}$}
    \psline(4.5,0)(8,0)
    \uput[u](6.25,0){$O$}
    \uput[d](8,0){$\frac{4\alpha -3 }{3q}$}
    \psline[arrows=|->](8,0)(11,0)
    \uput[u](9.5,0){$B$}
    \rput(11.5,0){$\lambda$}
  \end{pspicture}
\end{center}

If $\alpha$ is big enough then the $A$-interval disappears. Namely, if $\frac{2\alpha}{3q^3} > \frac{3- 2\alpha}{3 q}$ then the line looks as follows

\begin{center}
  \begin{pspicture}(0,-1)(11,1)
    \psset{arrows=|-}
    \psset{labelsep=10pt}
    \uput[d](0,0){$0$}
    \psline(0,0)(3,0)
    \uput[u](1.5,0){$E$}
    \psline(3,0)(8,0)
    \uput[d](3,0){$\frac{1}{q^3 + q}$}
    \uput[u](5.5,0){$O$}
    \uput[d](8,0){$\frac{4\alpha -3 }{3q}$}
    \psline[arrows=|->](8,0)(11,0)
    \uput[u](9.5,0){$B$}
    \rput(11.5,0){$\lambda$}
  \end{pspicture}
\end{center}

We are now ready to describe the $\alpha$-LMP algorithm. Given a value of $\lambda$, we need to decide whether to output the empty cover, the $A$-sets or the $B$-sets. If $\lambda$ falls in the interval corresponding to one of these three solutions then output that solution; since the cover output is optimal, the LMP property~\eqref{eqn:LMP} follows trivially.

If $\lambda$ falls in the $O$-interval and $\lambda \leq \frac{1}{3q}$ then output the $A$-sets; the LMP property holds since 
\[c(A) + \alpha\, \lambda\, \overline{p(A)} = \frac{2}{3} \alpha + \alpha\,  \lambda \,  2 q \leq \alpha (1 + \lambda\, q) = \alpha \Big(c(O) + \lambda\, (p(U) - p(O)) \Big).\]

If $\lambda$ falls in the $O$-interval and $\lambda > \frac{1}{3q}$ then output the $B$-sets; the LMP property holds since
\[c(B) + \alpha\, \lambda\, \overline{p(B)} = \frac{4}{3} \alpha < \alpha (1 + \lambda\, q) = \alpha \Big(c(O) + \lambda\, (p(U) - p(O))\Big).\]

\end{proof}
}{}

\ifthenelse{\not \boolean{journal}}
{The proof that such an algorithm exists is given in the full version of the paper.}{} Hence, if we use $\myalgo{A}$ as a black box we must build a partial cover with the sets $A_1, \ldots, A_q$ and $B_1, \ldots, B_q$. Note that in order to cover $q^2 + q$ elements either all $A$-sets, or all $B$-sets must be used. In the first case $\frac{q}{2}$ additional $B$-sets are needed to attain feasibility, and the solution has cost $\frac{4}{3}\alpha$; in the second case the solution is feasible but again has cost $\frac{4}{3}\alpha$. Theorem~\ref{theorem:lowerbound} follows.

One assumption usually made in the literature  \cite{AK00,GKS04,KPS06} is that $c_{\max} = \max_{j} c_j \leq \epsilon\, \OPT$, for some constant $\epsilon> 0$, or more generally an additive error in terms of $c_{\max}$ is allowed. This does not help in our construction as $c_{\max}$ can be made arbitrarily small by increasing~$q$.

Admittedly, our lower bound example belongs to a specific class of covering problem (every element belongs to at most three sets) and although the example can be embedded into a partial totally unimodular covering problem 
\ifthenelse{\boolean{journal}}{(see Appendix~\ref{appendix:lowerbound-TU})}{(see full version)}, it is not clear how to embed the example into other classes. Nevertheless, the  $\frac{4}{3}\alpha$ upper bound of Koneman et el. \cite{KPS06} makes no assumption about the underlying problem, only using the LMP property \eqref{eqn:LMP} in the analysis. It was entirely conceivable that the $\frac{4}{3} \alpha$ factor could be improved using a different merging strategy---Theorem~\ref{theorem:lowerbound} precludes this possibility.

\section{Partial Totally Balanced Cover} \label{section:P-TBC}

In order to overcome the lower bound of Theorem~\ref{theorem:lowerbound}, one must concentrate on a specific class of covering problems or make additional assumptions about the $\alpha$-LMP algorithm. In this section we focus on covering problems whose IP matrix $A$ is totally balanced. More specifically, we study the integrality gap of the standard linear relaxation for Partial Totally Balanced Cover (P-TBC) shown below.

\begin{theorem} \label{theorem:LP-gap} Let $\IP$ and $\LP$ be the cost of the optimal integral and fractional solutions of an instance of P-TBC. Then $\IP \leq \left(1 + \frac{1}{3^{k-1}}\right) \LP + k\, c_{\max}$ for any $k \in Z_+$. Furthermore, for any large enough $k \in Z_+$ the exists an instance where $\IP > \left(1 + \frac{1}{3^{k-1}} \right) \LP + \frac{k}{2}\, c_{\max}$.
\end{theorem}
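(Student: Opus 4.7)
The plan is to prove both halves of Theorem~\ref{theorem:LP-gap} by embedding Kolen's primal-dual algorithm, which is a $1$-LMP \emph{exact} algorithm for Prize-Collecting Totally Balanced Cover, into the Lagrangian framework of Section~\ref{section:LR}, and then iteratively refining the way in which the two boundary solutions are combined.

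First I would binary search on $\lambda$ to isolate a threshold interval $[\lambda_1,\lambda_2]$ with $\lambda_2-\lambda_1$ arbitrarily small, at whose endpoints Kolen's algorithm returns $\collection{C}_1$ with $p(\collection{C}_1)<P$ and $\collection{C}_2$ with $p(\collection{C}_2)\geq P$. Because Kolen's scheme solves the totally balanced prize-collecting LP exactly, the LMP inequality~\eqref{eqn:LMP} tightens to use $\LP$ in place of $\OPT$, so from~\eqref{eqn:cost-C} I get
\[
c(\collection{C}_j) \;\leq\; \LP + \lambda_j\,\big(p(\collection{C}_j)-P\big), \qquad j\in\{1,2\}.
\]
The base case $k=1$ follows by taking $\collection{C}_2$ directly and bounding the surplus term using the single ``pivot'' set whose inclusion pushed coverage past $P$: its cost is at most $c_{\max}$ and its penalty absorbs the excess, giving $\IP\leq 2\LP+c_{\max}$, which matches $1+1/3^{0}=2$.

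The inductive step is the core of the argument. Exploiting the fact that in Kolen's algorithm the dual variables grow along a laminar family (a consequence of totally balancedness), I would show that $\collection{C}_1$ and $\collection{C}_2$ differ only on sets that become dual-tight at the critical multiplier, and in particular that one can isolate a single pivot set $S^\star$ of cost at most $c_{\max}$ whose removal from $\collection{C}_2$ produces a new infeasible cover $\collection{C}_2'$. The crucial claim is that the surplus-to-deficit ratio of the new pair $(\collection{C}_1,\collection{C}_2')$ shrinks by a factor of exactly~$3$ in the worst case, because the laminar structure forces the next dual-tight swap to redistribute coverage in a prescribed way. Applying the inductive hypothesis with parameter $k-1$ to the residual instance and adding the peeled-off $c(S^\star)\leq c_{\max}$ telescopes to $\left(1+3^{-(k-1)}\right)\LP+k\,c_{\max}$.

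For the lower bound I would build a recursive family $\{\collection{I}_k\}$ of P-TBC instances. The base $\collection{I}_1$ is a reformulation of the construction from Section~\ref{section:lowerbound}, re-encoded so that the clusters lie on a path and the incidence matrix is totally balanced; this already exhibits a constant multiplicative gap. Each $\collection{I}_{k+1}$ glues a fresh ``layer'' of $q$ sets of cost $c_{\max}/2$ onto $\collection{I}_k$ along the same path, with profits calibrated so that the LP can route a $3^{-k}$ fraction of its coverage through the layer while any integral solution is forced to commit to at least half of the layer's sets. Summing the contributions across the $k$ layers then yields $\IP-\LP>3^{-(k-1)}\LP+\tfrac{k}{2}c_{\max}$.

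The main obstacle will be the structural step in the upper-bound induction: showing that the single pivot swap really shrinks the surplus-to-deficit ratio by exactly a factor~$3$ requires a delicate invariant linking Kolen's laminar dual growth to the primal coverage changes, and guaranteeing that the swap does not uncover elements that $S^\star$ was uniquely serving in $\collection{C}_2$. On the lower-bound side the subtle point is calibrating layer costs and profits so that the LP exploits precisely the $3^{-(k-1)}$ slack while the IP is forced into the $\tfrac{k}{2}c_{\max}$ additive loss, all while keeping the incidence matrix totally balanced; this constrains the admissible topology to laminar or interval-like nesting and dictates the geometric choice of cluster sizes.
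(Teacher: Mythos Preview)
Your high-level framework is right---Lagrangian relaxation plus Kolen's exact primal-dual algorithm for the prize-collecting problem---but the inductive mechanism you propose for the upper bound does not match the paper's and, more importantly, does not work as stated.

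The paper does not argue by induction on $k$ applied to a ``residual instance.'' Instead it constructs a \emph{merger graph} $G$ on the symmetric difference $\pruned{C} \oplus \pruned{C}^-$, with directed edges encoding domination (in the sense of Kolen's reverse-delete step) between sets of the two solutions. The key structural lemma is that $G$ is a \emph{forest of out-branchings}; this follows from the standard greedy form of $A$, not from any laminar family of dual variables. The algorithm ({\sc merge}/{\sc increase}/{\sc decrease}) starts from $D=\pruned{C}^-$ and walks through $G$, taking symmetric differences with whole subtrees. The factor~$3$ arises because whenever the procedure is forced to \emph{split} a subtree $\subtree{j}$---adding $j$ and then processing at least two children---the coverage offset $|P-p(D)|$ provably shrinks by a factor of~$3$ (children are processed greedily by benefit and no single one suffices). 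Each split costs one extra $c_{\max}$; after $k$ splits the offset is at most $3^{-(k-1)}$ of its initial value, which is itself bounded by $\DL/\lambda$, yielding the $(1+3^{-(k-1)})\DL$ term.

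Your proposed step---remove a single pivot $S^\star$ from $\collection{C}_2$ and claim the surplus-to-deficit ratio of the pair $(\collection{C}_1,\collection{C}_2')$ shrinks by~$3$---has no evident justification: removing one set from $\collection{C}_2$ leaves $\collection{C}_1$ untouched, so the deficit is unchanged and nothing controls the surplus. More seriously, ``applying the inductive hypothesis to the residual instance'' is ill-defined: after deleting $S^\star$ you no longer have two solutions arising from Kolen's algorithm at nearby multipliers, so the structural coupling between them (Lemmas~\ref{lemma:tiny-difference}--\ref{lemma:alternating}) is lost. The forest structure of the merger graph is exactly what lets the paper interpolate between the two solutions while keeping every tight set paid for by the dual.

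For the lower bound, the paper does not recycle the Section~\ref{section:lowerbound} construction (which is totally unimodular but not totally balanced). It builds a Multicut-on-Trees instance on a complete binary tree of height~$2q$ with calibrated internal and fringe paths, and exhibits the optimal fractional solution explicitly as a convex combination of two integral ones. Your layered-gluing sketch does not explain how total balancedness is preserved, nor how a single instance simultaneously realizes the $3^{-(k-1)}$ multiplicative gap and the $\tfrac{k}{2}\,c_{\max}$ additive gap.
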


\begin{center}
\begin{minipage}[t]{5cm}
\begin{gather*}
\begin{array}{r@{\hspace{.7ex}}l}
\multicolumn{2}{c}{\min \ \vXv{c}{x} } \\[0.4cm]
\hspace{2ex} \mXv{A}{x} + \mXv{I}{r} \geq & \vf{1} \\[0.2cm]
\vXv{p}{r} \leq & p(U) - P  \\[0.2cm]
r_i, x_e \geq 0
\end{array}
\end{gather*}
\end{minipage}
\begin{pspicture}(0,1.75)(2.5,1.75)
\psline[linewidth=1.5pt,arrowinset=0,arrowlength=1]{->}(2.5,0)
\rput(1.25,-.4){\footnotesize LP Duality}
\end{pspicture}
\begin{minipage}[t]{6cm}
\begin{gather*}
\begin{array}{r@{\hspace{.7ex}}l}
  \multicolumn{2}{c}{\max\ \vXv{1}{y}  - (p(U) - P)\, \lambda  }\\[0.4cm]
\hspace{5ex}\mXv{A^T }{y}  & \leq \vf{c} \hspace{0.5cm} \\[0.2cm]
\vf{y} & \leq  \lambda \vf{p} \\[0.2cm]
y_i, \lambda & \geq 0 \\[0.4cm]
\end{array}
\end{gather*}
\end{minipage}
\end{center}

\ifthenelse{\boolean{journal}}{The rest of this section is devoted to proving Theorem~\ref{theorem:LP-gap}}{The rest of this section is devoted to proving the upper bound in Theorem~\ref{theorem:LP-gap}, the lower bound is left for the full version of the paper.} Our approach is based on Lagrangian relaxation and Kolen's algorithm for Prize-Collecting Totally Balanced Cover (PC-TBC). The latter exploits the fact that a totally balanced matrix can be put into greedy standard form by permuting the order of its rows and columns; in fact, the converse is also true \cite{HKS85}. A matrix is in standard greedy form if it does not contain as an induced submatrix 
\begin{equation}
\left[\begin{array}{cc} 1 & 1 \\ 1 & 0 \end{array}\right] \label{eq:forbidden-matrix}
\end{equation}
There are polynomial time algorithms that can transform a totally balanced matrix into greedy standard form 
\cite{S93} by shuffling the rows and columns of $A$. Since this transformation does not affect the underlying covering problem, we assume that $A$ is given in standard greedy form.

\subsection{Kolen's algorithm for Prize-Collecting Totally Balanced Cover}

For the sake of completeness we describe Kolen's primal-dual algorithm for PC-TBC. The algorithm finds a dual solution $\vf{y}$ and a primal solution $C$, which is then pruned in a reverse-delete step to obtain the final solution $\pruned{C}$. The linear and dual relaxations for PC-TBC appear below.
\begin{center}
\vspace{-0.5em}
\begin{minipage}[t]{5cm}
\begin{gather*}
\begin{array}{r@{\hspace{.7ex}}l}
\multicolumn{2}{c}{\min \ \vXv{c}{x} +  \lambda \vXv{p}{r}} \\[0.4cm]
\hspace{2ex} \mXv{A}{x} + \mXv{I}{r} \geq & \vf{1} \\[0.2cm]
r_i, x_e \geq 0
\end{array}
\end{gather*}
\end{minipage}
\begin{pspicture}(0,1.75)(2.5,1.75)
\psline[linewidth=1.5pt,arrowinset=0,arrowlength=1]{->}(2.5,0)
\rput(1.25,-.4){\footnotesize LP Duality}
\end{pspicture}
\begin{minipage}[t]{5cm}
\begin{gather*}
\begin{array}{r@{\hspace{.7ex}}l}
  \multicolumn{2}{c}{\max\ \vXv{1}{y}} \\[0.4cm]
\hspace{3ex}\mXv{A^T }{y}  & \leq \vf{c} \hspace{0.5cm} \\[0.2cm]
\vf{y} & \leq  \lambda \vf{p} \\[0.2cm]
y_i & \geq 0 \\[0.4cm]
\end{array}
\end{gather*}
\end{minipage}
\end{center}

The residual cost of the set $j$ w.r.t. $\vf{y}$ is defined as $c'_j = c_j - \sum_{i : a_{ij} = 1 } y_i$. The algorithm starts from the trivial dual solution $\vf{y = 0}$, and processes the elements in increasing column order of $A^T$. Let $i$ the index of the current element. Its corresponding dual variable, $y_i$, is increased until either the residual cost of some set $j$ containing $i$ equals 0 (we say set $j$ becomes tight), or $y_i$ equals $\lambda p_i$ (Lines 3-5).

\begin{figure}
\begin{center}
\small
\fancybox{
\begin{minipage}[t]{5cm}
\begin{algorithm}{Kolen}{(\mf{A}, \vf{c}, \vf{p}, \lambda)}
\mbox{// Dual update} \\
\vf{y} \=  \vf{0},\ C \= \emptyset,\ \pruned{C} \= \emptyset \\
 \begin{FOR}{i \= 1 \TO n }
 \delta \= \min \{ c'_j\, |\, a_{ij} = 1 \}\\
 y_i \leftarrow \min \{ \lambda p_i, \delta \}
 \end{FOR} \\
 C \= \{ j \, | \, c'_j = 0 \}
\end{algorithm}  \vspace{-1em}
\end{minipage}
\hspace{5ex}
\begin{minipage}[t]{5cm}
\begin{algorithm}{\global\algline=6}{}
\mbox{// Reverse delete} \\
 \begin{WHILE}{C \neq  \emptyset}
 j \= \mbox{largest set index in $C$} \\
 \pruned{C} \= \pruned{C} + j \\
 C \= C \setminus \{\, j'\, |\, \mbox{$j$ dominates $j'$ or $j= j'$}\, \}
 \end{WHILE} \\
\RETURN{(\pruned{C}, \vf{y})}
\end{algorithm}  \vspace{-1em}
\end{minipage}
} 
\end{center}
\end{figure}

Let $C = \{ j \, | \, c'_j =0 \}$ be the set of tight sets after the dual update is completed. As it stands the cover $C$ may be too expensive to be accounted for using the lower bound provided by $\vXv{1}{y}$ because a single element may belong to multiple sets in $C$. The key insight is that some of the sets in $C$ are redundant and can be pruned. 

\begin{definition}
 Given sets $j_1, j_2$ we say that $j_1$ \emph{dominates} $j_2$ in $\vf{y}$ if $j_1 > j_2$ and there exists an item $i$ such that $y_i > 0$ and $i$ belongs to $j_1$ and $j_2$, that is, $a_{ij_1} = a_{ij_2} = 1$. 
\end{definition}

The reverse-delete step iteratively identifies the largest index $j$ in $C$, adds $j$ to $\pruned{C}$, and removes $j$ and all the sets it dominates. This is repeated until no set is left in $C$ (Lines 8--11).

Notice that all sets $j \in C$ are tight, thus we can pay for set $j$ by charging the dual variables of items that belong to $j$. Because of the reverse-delete step if $y_i > 0$ then $i$ belongs to at most one set in $\pruned{C}$; thus in paying for $\pruned{C}$ we charge covered items at most once. Using the fact $A$ is in standard greedy form, it can be shown \cite{K82} that if $i$ was left uncovered then we can afford its penalty, i.e., $y_i = \lambda p_i$. The solution $\pruned{C}$ is optimal for PC-TBC since
\begin{equation} \label{eq:1-LMP}
\sum_{j \in \pruned{C}} \ c_j \  + \hspace{-2ex}  \sum_{ \substack{i \in U\ \mathrm{s.t.} \\ \nexists\, j \in \pruned{C} \, :\,  a_{ij} = 1} } \hspace{-2ex} \lambda p_i \ =\  \hspace{-2ex}  \sum_{ \substack{i \in U \ \mathrm{s.t.} \\ \exists\, j \in \pruned{C} \, :\,  a_{ij} = 1} } \hspace{-2ex} y_i +  \hspace{-2ex}  \sum_{ \substack{i\in U\ \mathrm{s.t.} \\ \nexists\, j \in \pruned{C} \, :\,  a_{ij} = 1} } \hspace{-2ex} y_i \ = \ \sum_{i \in U} \ y_i .
\end{equation}

If we could find a value of $\lambda$ such that {\scshape Kolen}$(A,c,p,\lambda)$ returns a solution $(\pruned{C}, \vf{y})$ covering \emph{exactly}~$P$ profit, we are done since from \eqref{eq:1-LMP} it follows that
\begin{equation}
\sum_{j \in \pruned{C}} c_j = \sum_{i \in U} y_i - \lambda\, (p(U) - P). \label{eq:DL-P-TBC}
\end{equation}
Notice that $(\vf{y},\lambda)$ is a feasible for the dual relaxation of P-TBC and its cost is precisely the right hand side of \eqref{eq:DL-P-TBC}. Therefore for this instance IP=DL=LP and Theorem~\ref{theorem:LP-gap} follows.

Unfortunately, there are cases where no such value of $\lambda$ exists. Nonetheless, we can always find a \emph{threshold value} $\lambda$ such that for any infinitesimally small $\delta>0$, $\lambda^-= \lambda - \delta$ and $\lambda^+=\lambda + \delta$ produce solutions covering less and more than $P$ profit respectively.
A threshold value can be found using Megiddo's parametric search \cite{M78} by making $O( n \log m )$ calls to the procedure {\scshape Kolen}. \ifthenelse{\boolean{journal}}{For completeness the technique is sketched in Appendix~\ref{section:thresholdvalue}.}{}

Let $\vf{y}$ ($\vf{y^-}$) be the dual solution and $C$ ($C^-$) the set of tight sets when {\scshape Kolen} is run on $\lambda$ ($\lambda^-$). Without loss of generality assume $\pruned{C}$ covers more than $P$ profit. (The case where $\pruned{C}$ covers less than $P$ profit is symmetrical: we work with $\vf{y}^+$ and $C^+$ instead of $\vf{y}^-$ and $C^-$.)

Our plan to prove Theorem~\ref{theorem:LP-gap} is to devise an algorithm to merge $\pruned{C}$ and $\pruned{C}^-$ in order to obtain a cheap solution covering at least $P$ profit. 

\subsection{Merging two solutions} \label{section:merging}

Before describing the algorithm we need to establish some important properties regarding these two solutions and their corresponding dual solutions.

For any $i$, the value of $y^-_i$ is a linear function of $\delta$ for all $i$. This follows from the fact that $\delta$ is infinitesimally small. Furthermore, the constant term in this linear function is $y_i$.

\begin{lemma} \label{lemma:tiny-difference} For each $i \in U$ there exists $a \in Z$, independent of $\delta$, such that $y^-_i = y_i + a \delta$.
\end{lemma}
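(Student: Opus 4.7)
The plan is to proceed by induction on $i$, the order in which Kolen's algorithm processes elements in the for-loop of lines 3--5. The key observation is that $y_i$ is set by the single formula $y_i = \min\{\lambda p_i,\, \delta_i\}$ where $\delta_i = \min_{j:\, a_{ij}=1}\bigl(c_j - \sum_{i'<i,\, a_{i'j}=1} y_{i'}\bigr)$; this quantity depends only on $\lambda$ and on the previously computed duals. Since $\delta$ is infinitesimally small, the structural choices made by the algorithm at element $i$---namely, which alternative of the outer $\min$ is selected, and which set $j$ attains $\delta_i$---are the same in the run with $\lambda$ as in the run with $\lambda^{-}=\lambda-\delta$, provided a fixed tie-breaking rule is used; perturbing $\lambda$ then only shifts the numerical value of the selected branch.

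For the inductive step at element $i$ I would split into the two cases that determine $y_i$. In case (a), $y_i = \lambda p_i$ (the penalty cap is binding), and then $y_i^{-} = \lambda^{-} p_i = y_i - p_i\,\delta$, giving $a = -p_i$. In case (b), $y_i = c_j - \sum_{i'<i,\, a_{i'j}=1} y_{i'}$ for some tight set $j$; the inductive hypothesis $y_{i'}^{-} = y_{i'} + a_{i'}\,\delta$ with $a_{i'}\in Z$ then yields
\[
y_i^{-} \;=\; c_j - \sum_{i'<i,\, a_{i'j}=1} y_{i'}^{-} \;=\; y_i - \delta \sum_{i'<i,\, a_{i'j}=1} a_{i'},
\]
so $a = -\sum_{i'<i,\, a_{i'j}=1} a_{i'}$, an integer linear combination of the previously constructed coefficients. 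The assertion that $a \in Z$ tacitly uses the standard assumption that the profits $p_i$ are integral, which can always be arranged by scaling rational inputs.

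The main obstacle is justifying that the structural choices really do coincide across the two runs. By construction $\lambda$ is a threshold value, so the two executions must diverge \emph{somewhere}: $C$ and $C^{-}$ are not identical. I would handle this by fixing a tie-breaking convention up front (for instance, smallest set index wins in $\delta_i$, and the penalty cap wins over tightness in case of equality), and then arguing by strict-inequality continuity that whenever case (a) or (b) holds strictly at $\lambda$, the same case holds at $\lambda^{-}$ for sufficiently small $\delta$. The remaining \emph{degenerate} indices---where $\lambda p_i = \delta_i$ or several sets tie for $\delta_i$---are precisely the ones at which the threshold behavior tracked by Megiddo's parametric search is triggered; at each of them the induction still closes because the tie-broken choice used at $\lambda^{-}$ coincides with the branch chosen at $\lambda$, so the slope $a$ remains well-defined and inherits integrality from the two cases above.
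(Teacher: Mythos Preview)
Your inductive scheme is exactly the paper's approach: the paper's proof is a single sentence saying ``by induction on the number of iterations of the dual update step of {\sc Kolen}, using the fact that the same property holds for the residual cost of the sets.'' Your case split and your observation that the profits must be integral for $a\in Z$ are both sound.

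The handling of ties, however, does not quite close. You fix an \emph{a priori} tie-breaking rule at $\lambda$ (smallest set index; penalty cap beats tightness) and then assert that ``the tie-broken choice used at $\lambda^{-}$ coincides with the branch chosen at $\lambda$.'' That is not guaranteed by a fixed rule. Concretely, suppose at $\lambda$ we have the exact tie $\lambda p_i = c'_j$; your rule picks the penalty branch, which would give $y_i^{-}=y_i-p_i\delta$. But by the inductive hypothesis the residual cost satisfies $c_j'^{-}=c_j'+b\delta$ for some integer $b$, and if $b<-p_i$ then at $\lambda^{-}$ the set branch is strictly smaller, so in fact $y_i^{-}=y_i+b\delta$. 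The two runs take different branches, and the slope your rule predicts is wrong.

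The fix is to abandon the attempt to match branches and argue directly on values: by induction every candidate in the outer $\min$ at step $i$ has the form (value at $\lambda)+(\text{integer})\cdot\delta$; the minimum of finitely many such affine functions of $\delta$ is, for all sufficiently small $\delta>0$, again affine with integer slope (namely the least slope among those candidates that are tied for the minimum at $\lambda$). This is precisely what the paper means by ``the same property holds for the residual cost,'' and it makes any discussion of tie-breaking unnecessary.
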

\begin{proof}
By induction on the number of iteration of the dual update step of {\sc kolen}, using the fact that the same property holds for the residual cost of the sets.
\end{proof}

A useful corollary of Lemma~\ref{lemma:tiny-difference} is that $C^- \subseteq C$, since if the residual cost of a set is non-zero in $\vf{y}$ it must necessarily be non-zero in $\vf{y^-}$. The other way around may not hold.

At the heart of our approach is the notion of a merger graph $G=(V, E)$. The vertex set of $G$ is made up of sets from the two solutions, i.e., $V = \pruned{C} \oplus \pruned{C}^-$. The edges of $G$ are directed and given by
\begin{equation} \label{eq:edges-merger-graph}
E = \left\{
\hspace{1ex} (j_1, j_2) \hspace{1em}
\begin{array}{|@{\hspace{1em}}l@{\hspace{1ex}}}
j_1 \in \pruned{C}^- \setminus \pruned{C},\, j_2 \in \pruned{C} \setminus \pruned{C}^- \mbox{ s.t. $j_1$ dominates $j_2$ in $\vf{y}^-$, or } \\
j_1 \in \pruned{C} \setminus \pruned{C}^-,\, j_2 \in \pruned{C}^- \setminus \pruned{C} \mbox{ s.t. $j_1$ dominates $j_2$ in $\vf{y}$} \\
\end{array}
\right\}
\end{equation}

This graph has a lot of structure that can be exploited when merging the solutions.

\begin{lemma} The merger graph $G=(V,E)$ of $\pruned{C}^-$ and $\pruned{C}$ is a forest of out-branchings.
\end{lemma}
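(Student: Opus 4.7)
The plan is to verify two structural properties of $G$: (a) no directed cycle exists; and (b) every vertex has in-degree at most one. Given both, each connected component is a tree (acyclicity together with the in-degree bound rules out any undirected cycle, since on such a cycle the in-degree bound forces each vertex to have exactly one incoming cycle-edge, making the whole cycle directed, which (a) forbids). Combined with a tree edge count, the in-degree bound then forces a unique in-degree-zero vertex whose out-edges reach every other vertex via unique directed paths, which is exactly an out-branching.

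I would dispatch (a) first, since it is immediate from the definitions: every edge $(j_1, j_2) \in E$ witnesses a dominance, which forces $j_1 > j_2$. Hence indices strictly decrease along any directed path, so no directed cycle can exist.

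Property (b) is the main obstacle, and is where the totally balanced structure enters. I would proceed by contradiction: suppose some $j \in \pruned{C}$ has two in-edges from $j_1, j_1' \in \pruned{C}^- \setminus \pruned{C}$ with $j_1 > j_1'$ (the case $j \in \pruned{C}^-$ is symmetric, with $\vf{y}$ in place of $\vf{y}^-$). Pick dominance witnesses $i_1 \in j_1 \cap j$ and $i_1' \in j_1' \cap j$ with $y^-_{i_1}, y^-_{i_1'} > 0$. If $i_1 = i_1'$, that single item already lies in $j_1 \cap j_1'$. Otherwise, reorder so that the witnesses are $i < i'$, and let $\hat{j} \in \{j_1, j_1'\}$ be whichever set contains $i$; the remaining set in $\{j_1, j_1'\}$ must then contain $i'$. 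Since $j_1, j_1' > j$ we have $j < \hat{j}$ and $a_{i,j} = a_{i,\hat{j}} = a_{i',j} = 1$; avoiding the forbidden pattern~\eqref{eq:forbidden-matrix} then forces $a_{i',\hat{j}} = 1$. In either case $i' \in j_1 \cap j_1'$ with $y^-_{i'} > 0$, showing that $j_1$ dominates $j_1'$ in $\vf{y}^-$---contradicting the fact that reverse-delete retained both $j_1$ and $j_1'$.

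The main delicacy is the case split on which of $i_1, i_1'$ has the smaller index, needed so as to invoke the forbidden pattern on the correct $2 \times 2$ submatrix of $A$; once that bookkeeping is done the greedy-standard-form hypothesis applies mechanically.
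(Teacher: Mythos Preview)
Your proof is correct and follows essentially the same approach as the paper: acyclicity from the index-monotonicity of dominance, then in-degree at most one by contradiction, using the forbidden-submatrix property of greedy standard form to show that two putative in-neighbors in $\pruned{C}^-$ would dominate one another in $\vf{y}^-$. The only cosmetic differences are that you work with $A$ rather than $A^T$ (harmless, since the forbidden pattern~\eqref{eq:forbidden-matrix} is symmetric) and that you spell out why acyclicity together with the in-degree bound yields an out-branching forest, which the paper takes as immediate.
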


\begin{proof}
 First note that $G$ is acyclic, since if $(j_1, j_2) \in E$ then necessarily $j_1 > j_2$. Thus, it is enough to show that the in-degree of every $j \in V$ is at most one.
Suppose otherwise, that is, there exist $j_1, j_2 \in V$ such that $(j_1,j), (j_2,j) \in E$. Assume that $j_1 < j_2$ and $j \in \pruned{C}$ (the remaining cases are symmetrical).

\begin{floatingfigure}[r]{4.5cm}
\vspace{1.5ex}
\hspace{-3.5ex}
\begin{minipage}{4.25cm}
$\begin{array}{ccc@{\hspace{2ex}}|@{\hspace{2ex}}cc}
    & i_1 & i_2 & i_2 & i_1 \\
j   & 1   &  1 &  1  & 1 \\
j_1 & 1   &  \psframebox{1} &  & 1 \\
j_2 &     &  1 & 1 & \psframebox{1} 
\end{array}$
\end{minipage}
 \vspace{1ex}
\end{floatingfigure}

By definition \eqref{eq:edges-merger-graph}, we know that $j_1\, (j_2)\! \in \pruned{C}^-$ and that there exists $i_1$ ($i_2$) that belongs to $j$ and $j_1$ ($j_2$) such that $y^-_{i_1}> 0$ ($y^-_{i_2}> 0$). Since $A^T$ is in standard greedy form we can infer that $i_2$ belongs to $j_1$ if $i_1 < i_2$, or $i_1$ belongs to $j_2$ if $i_1> i_2$: The diagram on the right shows how, using the fact that $A^T$ does not contain~\eqref{eq:forbidden-matrix} as an induced submatrix, we can infer that the boxed entries must be 1. In either case we get that $j_2$ dominates $j_1$ in $\vf{y}^-$, which contradicts the fact that both belong to $\pruned{C}^-$.
\end{proof}

\begin{figure}[h]
\begin{center}
\small
\fancybox{
\begin{minipage}{7cm}
\begin{algorithm}{merge}{(\pruned{C}^-, \pruned{C})} 
\mbox{let $G$ be the merger graph for $\pruned{C}^-$ and $\pruned{C}$} \\
D \= \pruned{C}^- \\
 \begin{FOR}{\mbox{each root $r$ in $G$ } }
 \begin{IF}{p(D \oplus \subtree{r}) \leq P}
    \algkey{then}  D \= D \oplus \subtree{r} \\
    \algkey{else} \RETURN \CALL{increase}(r,D)
 \end{IF}
 \end{FOR}
\end{algorithm}
\vspace{-2ex}
\end{minipage}

}
\end{center}
\end{figure}

The procedure {\scshape merge} starts from the unfeasible solution $D = \pruned{C}^-$ and guided by the merger graph~$G$, it modifies $D$ step by step until feasibility is attained. The operation used to update $D$ is to take the symmetric difference of $D$ and a subtree of $G$ rooted at a vertex $r \in V$, which we denote by $\subtree{r}$. For each root $r$ of an out-branchings of $G$ we set $D \leftarrow D \oplus \subtree{r}$, until $p(D \oplus \subtree{r}) > P$. At this point we return the solution produced by {\scshape increase}$(r,D)$.

Notice that after setting $D \leftarrow D \oplus \subtree{r}$ in Line 5, the solution $D$ ``looks like'' $\pruned{C}$ within $\subtree{r}$. Indeed, if all roots are processed then $D = \pruned{C}$. Therefore, at some point we are bound to have $p(D \oplus \subtree{r}) > P$ and to make the call {\scshape increase}$(r,D)$ in Line 6.
Before describing {\scshape increase} we need to define a few terms. Let the \emph{absolute benefit} of set $j$, which we denote by $b_j$, be the profit of elements uniquely covered by set $j$, that is,
\begin{equation}
b_j = p \left( \big\{\ i\in U \  \suchthat\  \forall\, j' \in \pruned{C} \cup \pruned{C}^- : a_{ij'} = 1 \mbox{ iff } j'=j\ \big\}\right).
\end{equation}
Let  $D \subseteq \pruned{C} \cup \pruned{C}^-$. Note that if $j\in D$, the removal of $j$  decreases the profit covered by $D$ by at least $b_j$; on the other hand, if $j \notin D$, its addition increases the profit covered by at least $b_j$. This notion of benefit can be extended to subtrees,
\begin{equation}
\benefit{T_j}{D} = \sum_{j' \in \subtree{j} \setminus D} b_{j'} - \sum_{j' \in \subtree{j} \cap D} b_{j'}.
\end{equation}
We call this quantity the \emph{relative benefit} of $\subtree{j}$ with respect to $D$. It shows how the profit of uniquely covered elements changes when we take $D \oplus \subtree{j}$. Note that $\benefit{\subtree{j}}{D}$ can positive or negative.

Everything is in place to explain {\scshape increase}$(j,D)$. The algorithm assumes the input solution is unfeasible but can be made feasible by adding some sets in $\subtree{j}$; more precisely, we assume $p(D) \leq P$ and $P < p(D) + \benefit{\subtree{j}}{D}$. If adding $j$ to $D$ makes the solution feasible then return $D+j$ \mbox{(Lines 2-3)}. If there exists a child $c$ of $j$ that can be used to propagate the call down the tree then do that \mbox{(Lines 4-5)}. Otherwise, \emph{split} the subtree $\subtree{j}$: Add $j$ to $D$ and process the children of $c$, setting $D \leftarrow D \oplus \subtree{c}$ until $D$ becomes feasible (Lines 6-9). At this point $p(D) > P$ and $p(D \oplus \subtree{c}) \leq P$. If $P - p(D \oplus \subtree{c})< p(D) - P $ then call {\scshape increase}$(c,D \oplus \subtree{c})$ else call {\scshape decrease}$(c,D)$ and let $D'$ be the cover returned by the recursive call (Lines 10-12). Finally, return the cover with minimum cost between $D$ and $D'$.

\begin{figure}[h]
\small
\fancybox{
\begin{minipage}[t]{5cm}
\begin{algorithm}{increase}{(j, D)}
\mbox{// assume $ p(D) \leq P < p(D) + \benefit{\subtree{j}}{D}$ } \\
\begin{IFTHEN}{p(D + j) \geq P} \RETURN D + j \end{IFTHEN} \\
\begin{IFTHEN}{\exists \mbox{ child $c$ of $j$}\,:\, p(D) + \benefit{\subtree{c}}{D} > P } \RETURN{\CALL{increase}(c,D)} \end{IFTHEN} \\
D \= D + j \\
\begin{WHILE}{\mbox{$p(D) \leq P$}}
c \= \mbox{child of $j$ maximizing } \benefit{\subtree{c}}{D} \\
D \= D \oplus \subtree{c}
\end{WHILE} \\
\begin{IF}{ P - p(D \oplus \subtree{c})< p(D) - P  }
\algkey{then} D' \= \CALL{increase}(c,D \oplus \subtree{c}) \\
\algkey{else} D' \= \CALL{decrease}(c,D)
\end{IF} \\
\RETURN \mbox{min cost } \{D, D'\}
\end{algorithm} \vspace{-1em}
\end{minipage}
\begin{minipage}[t]{5cm}
\begin{algorithm}{decrease}{(j, D)}
\mbox{// assume $ p(D) \geq P> p(D) + \benefit{\subtree{j}}{D}$ } \\
\begin{IFTHEN}{p( (D \oplus \subtree{j}) + j) \geq P} \RETURN (D \oplus \subtree{j}) + j \end{IFTHEN} \\
\begin{IFTHEN}{\exists \mbox{ child $c$ of $j$}\,:\, p(D) + \benefit{\subtree{c}}{D} < P } \RETURN{\CALL{decrease}(c,D)} \end{IFTHEN} \\
D \= D + j \\
\begin{WHILE}{\mbox{$p(D) \geq P$}}
c \= \mbox{child of $j$ minimizing } \benefit{\subtree{c}}{D} \\
D \= D \oplus \subtree{c}
\end{WHILE} \\
\begin{IF}{  p(D \oplus \subtree{c}) - P < P - p(D)}
 \algkey{then} D' \= \CALL{increase}(c,D) \\
 \algkey{else} D' \= \CALL{decrease}(c,D \oplus \subtree{c})
\end{IF} \\
\RETURN \mbox{min cost } \{D \oplus \subtree{c}, D'\}
\end{algorithm} \vspace{-1em}
\end{minipage}
\hspace{-2ex}
} 
\end{figure}

The twin procedure {\scshape decrease}$(j,D)$ is essentially symmetrical: Initially the input is feasible but can be made unfeasible by removing some sets in $\subtree{j}$; more precisely $p(D)\geq P$ and  $P< p(D) + \benefit{\subtree{c}}{D}$.

At a very high level, the intuition behind the {\scshape increase/decrease} scheme is as follows. In each call one of three things must occur:
\begin{list}{}{\setlength{\leftmargin}{5em}\setlength{\labelsep}{2ex}\setlength{\topsep}{0.75ex}\setlength{\itemsep}{0.5ex}\setlength{\parsep}{0cm}}
\item[(i)] A feasible cover with a small coverage excess is found (Lines 2-3), or
\item[(ii)] The call is propagated down the tree at no cost (Lines 4-5), or
\item[(iii)] A subtree $T_j$ is split (Lines 6-9). In this case, the cost $c_j$ cannot be accounted for, but the offset in coverage $|P-p(D)|$ is reduced at least by a factor of 3.
\end{list}
If the {\scshape increase/decrease} algorithms split many subtrees (incurring a high extra cost) then the offset in coverage must have been very high at the beginning, which means the cost of the dual solution is high and so the splitting cost can be charged to it. In order to flesh out these ideas into a formal proof we need to establish some crucial properties of the merger graph and the algorithms. \ifthenelse{\not \boolean{journal}}{Proofs are omitted due to lack of space.} {}

\begin{lemma} \label{lemma:white-coverage} 
If $y_i < \lambda\, p_i$ then there exist $j' \in \pruned{C}$ and $j'' \in \pruned{C}^-\!$ such that either $j'\! =\! j''$ or $(j', j'') \in E$ or $(j'', j') \in E$.
\end{lemma}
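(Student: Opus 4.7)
The plan is to leverage two complementary facts about Kolen's algorithm. First, the LMP identity \eqref{eq:1-LMP} forces element $i$ to be covered by $\pruned{C}$ whenever $y_i < \lambda p_i$: if no set in $\pruned{C}$ contained $i$, then \eqref{eq:1-LMP} would charge the full penalty $\lambda p_i$ to $y_i$, contradicting the strict inequality. Hence there is some $j' \in \pruned{C}$ with $a_{ij'}=1$. Second, by Lemma~\ref{lemma:tiny-difference} we have $y^-_i = y_i + a\delta$ for an integer $a$ independent of the infinitesimal $\delta$, and combined with $\lambda^- = \lambda - \delta$ and the strict inequality $y_i < \lambda p_i$, this yields $y^-_i < \lambda^- p_i$ for $\delta$ small enough. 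Applying the same argument to the perturbed run produces some $j'' \in \pruned{C}^-$ with $a_{ij''}=1$.

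If any set of $\pruned{C}\cap \pruned{C}^-$ covers $i$, pick it as both $j'$ and $j''$ and we are done. Otherwise the covering sets can be chosen so that $j' \in \pruned{C}\setminus\pruned{C}^-$ and $j'' \in \pruned{C}^-\setminus\pruned{C}$. Assume without loss of generality that $j' > j''$; the other case is symmetric after swapping the roles of $\vf{y}$ and $\vf{y}^-$. In the generic situation $y_i > 0$ (raised by the dual-update step upon processing $i$), so element $i$ itself witnesses that $j'$ dominates $j''$ in $\vf{y}$, and thus $(j', j'') \in E$ by definition~\eqref{eq:edges-merger-graph}. In the symmetric case $j'' > j'$, the positivity $y^-_i > 0$ (which follows from $y_i > 0$ and Lemma~\ref{lemma:tiny-difference}) yields $j''$ dominates $j'$ in $\vf{y}^-$, giving $(j'', j') \in E$.

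The main obstacle is the degenerate case $y_i = 0$ with $\lambda p_i > 0$, in which $i$ cannot directly serve as a dominance witness. Here one exploits the standard greedy form of $A$ together with the reverse-delete invariants of $\pruned{C}$ and $\pruned{C}^-$: any set in $\pruned{C}$ containing $i$ is tight, so it must contain some other element $i^*$ with $y_{i^*}>0$, and the forbidden submatrix~\eqref{eq:forbidden-matrix} can then be used—much as in the proof that $G$ is a forest of out-branchings—to exhibit a common positive-dual element shared by $j'$ and $j''$. Once such a witness is in hand, the argument of the previous paragraph applies. Verifying that this degenerate case can always be reduced to the generic one is the main subtlety in the proof.
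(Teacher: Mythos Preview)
Your argument for the case $y_i > 0$ is correct and pleasantly direct: element $i$ itself witnesses that $j'$ dominates $j''$ (or vice versa), immediately yielding the desired edge. The gap is in the case $y_i = 0$, which you flag as ``degenerate'' but which is not a negligible corner case---it occurs whenever some set containing $i$ was already tight when $i$ was processed, and your sketch for it does not work as stated.

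Here is the problem. You pick an arbitrary $j' \in \pruned{C}$ covering $i$ and take any $i^*$ with $y_{i^*}>0$ and $a_{i^*j'}=1$; you then hope the forbidden submatrix~\eqref{eq:forbidden-matrix} forces $a_{i^*j''}=1$. But that inference only fires when the hypothetical $0$-entry sits in the bottom-right corner of the $2\times 2$ block, i.e., when both $i < i^*$ and $j' < j''$. Nothing in your choice of $j'$, $j''$, $i^*$ guarantees either ordering; in the other three orderings the pattern is not forbidden and $a_{i^*j''}$ may well be $0$. Moreover, when $y_i = 0$ there can be several sets in $\pruned{C}$ (and in $\pruned{C}^-$) covering $i$---they no longer dominate one another through $i$---so an arbitrary pair $(j',j'')$ need not be joined by an edge at all, even though \emph{some} pair is.

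The paper avoids this by not choosing $j',j''$ directly. It first fixes a \emph{pivot} set $j \in C^- \subseteq C$ containing $i$ that is tight right after step $i$; this guarantees that every element of $j$ with positive dual has index at most $i$. The sets $j_1 \in \pruned{C}^-$ and $j_2 \in \pruned{C}$ are then obtained as dominators of this single $j$ through witnesses $i_1, i_2 \le i$, and it is precisely the pair of orderings $i_1,i_2 \le i$ and $j < j_1, j_2$ that makes all the forbidden-submatrix inferences go through---first to show $a_{ij_1} = a_{ij_2} = 1$, and then (comparing $i_1$ with $i_2$) to produce a common positive-dual element of $j_1$ and $j_2$. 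This pivot-through-$j$ step is the load-bearing idea your proposal is missing; the analogy with the forest-of-out-branchings proof is apt, but note that that proof also routes through a common third set $j$, which is exactly what you have not supplied.
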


\ifthenelse{\boolean{journal}}{
\mbox{}\vspace{-1em}

\begin{floatingfigure}[r]{5.5cm}
\vspace{-1ex}
\hspace{-2.5ex}
\begin{minipage}{5cm}
\psset{framesep=2pt}
$\begin{array}{cccc@{\hspace{2ex}}|@{\hspace{2ex}}ccc}
    & i_1 & i_2 & i & i_2 & i_1 & i \\
j   & 1   &  1 &  1  & 1 & 1 & 1 \\
j_1 & 1   &  \psframebox{1} & \psframebox{1} &  & 1 & \psframebox{1} \\
j_2 &     &  1 & \psframebox{1} & 1 & \psframebox{1}  & \psframebox{1} 
\end{array}$
\end{minipage}
\vspace{1ex}
\end{floatingfigure}

\noindent {\it Proof.} Since $y_i < \lambda\, p_i$, by Lemma~\ref{lemma:tiny-difference} we get that $y^-_i < \lambda\, p_i$ as well. Thus, there exists a set $j \in C^-$ such that $a_{ij}=1$ that becomes tight right after processing $i$ in the dual update of {\sc kolen}. Due to the reverse-delete step either $j \in \pruned{C}^-$ or there exists $j_1 \in \pruned{C}^-$ that dominates $j$ in $\vf{y^-}$ trough some element $i_1 \in U$, i.e., $y_{i_1} > 0$ and $a_{i_1 j} = a_{i_1 j_1} = 1$. In the latter case, since the set $j$ is tight after $i$ is processed, it follows that $i_1 \leq i$. Because $A$ is in standard greedy form we infer that $a_{i j_1} = 1$.
By Lemma~\ref{lemma:tiny-difference}, we have $C^- \subseteq C$, thus $j \in C$. A similar reasoning as above shows that either $j \in \pruned{C}$ or there exists $j_2 \in \pruned{C}$ that dominates $j$ in $\vf{y}$ trough some element $i_2 \in U$ such that $a_{i j_2} = 1$ and $i_2 \leq i$.

If there exists a set in $\pruned{C}^- \cap \pruned{C}$ covering $i$ the lemma follows, so suppose otherwise. If $j \in \pruned{C}^-$ ($j \in \pruned{C}$) then there exists $j_2\in \pruned{C}$ $(j_1 \in \pruned{C}^-$) that dominates $j$ in $\vf{y^-}$ ($\vf{y}$), and again the lemma holds.
Finally, consider the case $j \notin \pruned{C}^-$ and $j \notin \pruned{C}$. Assume $j_1 < j_2$, the other case is symmetrical.
Because $A$ is in standard greedy form we get that $a_{i_2j_1} = 1$ and $(j_2, j_1) \in E$ if $i_1 < i_2$, or $a_{i_1j_2}=1$ and $(j_1, j_2) \in E$ if $i_2 < i_1$. In either case the lemma follows.
The diagram on the right shows a summary of the entries that were inferred using the fact that $A$ does not contain~\eqref{eq:forbidden-matrix} as an induced submatrix. \qed
}{}

 \begin{lemma} \label{lemma:alternating} Let $(j,D)$ be the input of {\sc increase/decrease}. Then at the beginning of each call we have $j' \in D$ or $j'' \in D$ for all $(j',j'') \in E$. Furthermore, if $j' \in D$ and $j'' \in D$ then $j'$ or $j''$ must have been split in a previous call. 
 \end{lemma}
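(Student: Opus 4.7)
My proof plan is by induction on the sequence of calls to \textsc{increase} and \textsc{decrease}. Colour each vertex $v\in V$ \emph{white} if $v\in\pruned{C}^-\setminus\pruned{C}$ (so initially $v\in D$) and \emph{black} if $v\in\pruned{C}\setminus\pruned{C}^-$ (so initially $v\notin D$). By the construction of $E$ in~\eqref{eq:edges-merger-graph}, every edge joins a white vertex to a black one, so at the very first call (issued from \textsc{merge}, possibly after some full-tree flips in the main loop) exactly one endpoint of every edge of $E$ is in $D$ and no split has occurred yet.

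For the inductive step, $D$ can change between two consecutive call-starts in only three ways: (i)~a full flip $D\leftarrow D\oplus\subtree{r}$ for a root $r$ inside the \textsc{merge} loop; (ii)~a pure propagation (Lines~4--5), which leaves $D$ unchanged and trivially preserves the invariant; or (iii)~a split of $j$ (Lines~6--9) followed by a recursive call on some child $c$. Case~(i) toggles every vertex of $\subtree{r}$ simultaneously; because $G$ is a forest of out-branchings no edge of $E$ leaves $\subtree{r}$, so every edge inside $\subtree{r}$ still has exactly one endpoint in $D$ and no split need be invoked as a witness.

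The delicate case is~(iii). After splitting, $j$ is in $D$, a selection of children's subtrees has been flipped (the last such flip may be undone by passing $D\oplus\subtree{c}$ to \textsc{increase}), and the recursion enters on one of those children $c$. Edges lying strictly inside a fully flipped $\subtree{c'}$ are handled by the case-(i) argument. The remaining affected edges are incident to $j$: the child edges $(j,c')$ and, if $j$ is not a root of $G$, the incoming edge $(\mathrm{parent}(j),j)$. For each child $c'$, I would enumerate the four sub-cases (colour of $j$)$\times$(flipped or original status of $c'$): two of them leave only $j$ in $D$, while the other two put both endpoints in $D$, and in those two the just-performed split of $j$ is exactly the required witness. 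For the incoming edge, the only worrying situation is that $j$ changes status from ``out'' to ``in,'' which happens only when $j$ is black; but then $\mathrm{parent}(j)$ was already in $D$ by the inductive hypothesis, and once again the split of $j$ supplies the witness.

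The main obstacle is organisational rather than conceptual: one must carefully keep apart ``$v\in D$ for colour reasons'' and ``$v\in D$ because it was split,'' since only the latter can discharge the second clause of the lemma for an edge that ends up with both endpoints in $D$. The two structural facts that make the case analysis go through are the colour alternation along edges of $E$ (immediate from the definition) and the out-branching structure of $G$ (which guarantees that flipping a subtree never straddles an $E$-edge except through the subtree root's incoming edge, which is precisely the edge handled by the split itself).
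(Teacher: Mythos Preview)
Your argument is correct and follows essentially the same induction on calls as the paper; the paper's version is terser (it only states that edges strictly inside each $T_c$ keep exactly one endpoint in $D$ across the flip and leaves the edges incident on $j$ implicit), whereas you spell out the boundary edges explicitly. Two small glitches worth cleaning up: the aside that ``$j$ changes from out to in only when $j$ is black'' and the parametrisation of the child-edge cases by colour both fail once earlier splits have flipped $T_j$ as a whole---membership in $D$ no longer tracks colour directly---but this is harmless because your actual reasoning for those edges correctly relies on the inductive hypothesis (at least one endpoint in $D$, and no vertex of $T_j$ has yet been split) rather than on colour; likewise, case~(i) belongs only in the base case since \textsc{merge} issues a single call to \textsc{increase} and never resumes its loop afterwards.
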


 \ifthenelse{\boolean{journal}}
 { 
 \begin{proof}
We prove the lemma by induction on the number of calls to {\sc increase/decrease}. Clearly, the lemma is true on the first call when $|\{j',j''\} \cap D| = 1$ for all $(j',j'') \in E$ and no vertex has been split yet.

 Suppose the lemma holds at the beginning of a call to {\sc increase/decrease}, we argue that it also holds at the beginning of the next call. If the next call happens in Line 5 then the lemma holds trivially since $D$ did change. Otherwise, $j$ is split; namely, $j$ is added to $D$ (Line 6) and $D$ is updated by taking $D \oplus \subtree{c}$ (Line 9) for a number of children $c$ of $j$. Before splitting $j$, for every child $c$ of $j$ and every $(j' j'') \in \subtree{c}$ we have $|\{j',j''\} \cap D| = 1$, and the same holds after $j$ is split. Hence, the lemma follows.
  \end{proof}
 }
 {}

\begin{lemma} \label{lemma:precondition} Let $(j,D)$ be the input of {\sc increase/decrease}. For {\sc increase} we always have $p(D) \leq P < p(D) + \benefit{\subtree{j}}{D}$, and for {\sc decrease} we have $p(D) \geq  P > p(D) + \benefit{\subtree{j}}{D}$.
\end{lemma}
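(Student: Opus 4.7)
The plan is to proceed by induction on the depth of the {\sc increase}/{\sc decrease} recursion, simultaneously maintaining the key structural identity
\[
p(D \oplus \subtree{c}) = p(D) + \benefit{\subtree{c}}{D}
\]
for every cover $D$ encountered by these algorithms and every subtree $\subtree{c}$ that they consider. Granting this identity, the precondition of each recursive call reduces to the conditions explicitly checked by the algorithm, so the bulk of the work is in establishing the identity itself.

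For the base case, when {\sc merge} invokes {\sc increase}$(r, D)$ on its Line 6, the inequality $p(D) \leq P$ has been preserved by the preceding for-loop iterations: $D$ starts as $\pruned{C}^-$, which covers less than $P$ by our standing assumption, and subsequent iterations only replace $D$ by $D \oplus \subtree{r}$ when the resulting cover still satisfies $p(\,\cdot\,) \leq P$. The call itself fires precisely because $p(D \oplus \subtree{r}) > P$, so the identity yields $p(D) + \benefit{\subtree{r}}{D} > P$.

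For the inductive step, each recursive call follows mechanically from the identity. For the Line 3 call {\sc increase}$(c, D)$ both halves of the precondition are inherited from the current call together with the explicit branch condition. For the Line 10 call {\sc increase}$(c, D \oplus \subtree{c})$, the state $p(D \oplus \subtree{c}) \leq P$ is exactly that of $D$ just before the while loop's final toggle, while $\benefit{\subtree{c}}{D \oplus \subtree{c}} = -\benefit{\subtree{c}}{D}$ follows directly from the definition, so the identity gives $p(D \oplus \subtree{c}) + \benefit{\subtree{c}}{D \oplus \subtree{c}} = p(D) > P$. The Line 11 call {\sc decrease}$(c, D)$ uses the same observation with the opposite inequality, and the recursive calls originating inside {\sc decrease} are handled symmetrically.

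The main obstacle will be establishing the identity, which amounts to showing that elements covered by multiple sets in $\pruned{C} \cup \pruned{C}^-$ do not change coverage status when a considered subtree is toggled. After partitioning $U$, the uncovered and uniquely covered elements contribute to both sides of the identity in an obvious way. For multiply covered elements I would appeal to Lemma \ref{lemma:alternating}: along every edge $(j',j'') \in E$ at least one endpoint lies in $D$, and because $G$ is an out-branching any edge either has both endpoints in $\subtree{c}$ (in which case their memberships flip in lockstep, preserving a covering endpoint) or only its head in $\subtree{c}$ (in which case an ancestor of $c$ has already been ``split'' by an earlier call, so the alternating property allows both endpoints to remain in $D \oplus \subtree{c}$). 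For pairs of sets both in $\pruned{C}$ or both in $\pruned{C}^-$ that happen to share a covered element, the reverse-delete step of {\sc kolen} forces $y_i = 0$ on that element (otherwise the larger-indexed set would dominate the smaller one), and Lemma \ref{lemma:white-coverage} then shows that any such element is still anchored by a merger-graph edge, so no coverage slips through unaccounted by $\benefit{}{}$.
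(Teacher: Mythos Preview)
Your proposal is correct and shares the paper's inductive skeleton: both argue by induction on the recursion depth and reduce everything to the identity $p(D \oplus \subtree{c}) = p(D) + \benefit{\subtree{c}}{D}$. The difference lies in how that identity is justified for multiply-covered elements. The paper splits on the dual value: when $y_i < \lambda p_i$ it invokes Lemma~\ref{lemma:white-coverage} directly, and when $y_i = \lambda p_i$ it uses the ``purity'' fact $D \cap \subtree{j} = \pruned{C}^- \cap \subtree{j}$ (at the base case) to conclude that exactly one of the two covering sets lies in $D$ and the other in $D \oplus \subtree{j}$; for the inductive step the paper simply asserts the identity after adding $j$, relying implicitly on the same purity argument applied to child subtrees. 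You instead make Lemma~\ref{lemma:alternating} the workhorse---every merger-graph edge keeps an endpoint in $D$, and toggling a subtree either flips both endpoints in lockstep or leaves the split parent in place---falling back on Lemma~\ref{lemma:white-coverage} only for the residual case of two covering sets from the same solution. Your route has the merit of treating the base case and the inductive step uniformly (whereas the paper's inductive step is terse to the point of being a bare assertion), at the cost of a slightly more intricate case analysis on how edges can straddle $\subtree{c}$.
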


\ifthenelse{\boolean{journal}}
{
\begin{proof}
By induction on the number of calls to {\sc increase/decrease}. At the base case the call is made by {\sc merge} and thus we have $p(D) \leq P < p(D \oplus \subtree{j})$. We claim  that
\begin{equation}
\label{eq:coverage-change} p(D \oplus \subtree{j}) - p(D) = \benefit{\subtree{j}}{D}, 
\end{equation}
from which the lemma follows. We argue that every $i \in U$ contributes the same amount to each side of \eqref{eq:coverage-change}. From now on we assume that $i$ is not covered by $D \setminus T_j$, otherwise its contribution to \eqref{eq:coverage-change} is zero.

Suppose $i$ is uniquely covered by some set $j \in T_j$. Then its contribution to both sides of \eqref{eq:coverage-change} is either $p_i$ or $-p_i$ depending on whether $i \in D \oplus \subtree{j}$ or $i \in D$.

Now consider the case when $i$ is covered by multiple sets in $T_j$. Recall that $|\{j',j''\} \cap D|=1$ for all $(j',j'') \in T_j$ because $D \cap T_j = \pruned{C}^- \cap T_j$. If $y_i < \lambda p_i$ then by Lemma~\ref{lemma:white-coverage} is covered by both $D$ and $D \oplus \subtree{j}$. If $y_i = \lambda p_i$ then $i$ covered by at most one set in $\pruned{C}^-$ and at most one set in $\pruned{C}$. It follows that $i$ must be covered exactly by one set in $D$ and another set in $D \oplus \subtree{j}$. Hence, the contribution of $i$ to both sides of \eqref{eq:coverage-change} is zero.

For the inductive step suppose the lemma holds at the beginning of this call. Clearly, if the next call is made in Line~3 the lemma holds. Suppose that the call is made in Lines 11-12. Note that after adding $j$ to $D$ (Line 7) we have for every child $c$ of $j$
\[p(D \oplus \subtree{c}) - p(D) = \benefit{\subtree{c}}{D}.\]
Therefore, by inductive hypothesis, we are bound to exit the while loop and the lemma holds in the next call.
\end{proof}
}{}

Recall that $\vf{y}$ is also a feasible solution for the dual relaxation of P-TBC and its cost is given by $\DL = \sum_{i=1}^n y_i - (p(U) - P) \lambda$. The following lemma proves the upper bound of Theorem~\ref{theorem:LP-gap}.

\begin{lemma} \label{lemma:merge} Suppose {\scshape merge} outputs $D$. Then $c(D) \leq \big(1+ \frac{1}{3^{k-1}}\big) \DL + k\, c_{\max}$ for all $k \in Z_+$.
\end{lemma}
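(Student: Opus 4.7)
To prove the bound, I will produce, for every $k \geq 1$, a feasible cover $D_k$ encountered during the execution of {\scshape merge} such that $c(D_k) \leq (1 + 1/3^{k-1})\DL + k\, c_{\max}$. Because Line~13 of {\scshape increase} and {\scshape decrease} returns the cheaper of its two feasible arguments (and Line~3 of both returns a feasible base case), the cover $D$ output by {\scshape merge} is the cheapest along the entire recursion path, so $c(D) \leq \min_{k} c(D_k)$ and the lemma follows. The starting cost estimate comes from~\eqref{eq:1-LMP}: applied to $\vf{y^-}$ and passed to the limit $\delta \to 0$ it yields $c(\pruned{C}^-) = \DL - \lambda(P - p(\pruned{C}^-)) \leq \DL$, hence $\lambda \Delta_0 \leq \DL$ where $\Delta_0 := P - p(\pruned{C}^-)$; the analogous identity $c(\pruned{C}) = \DL + \lambda(p(\pruned{C}) - P)$ holds for $\vf{y}$.

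The central technical claim is the \emph{offset contraction}: whenever {\scshape increase}$(j, D)$ (symmetrically {\scshape decrease}$(j, D)$) reaches the splitting branch with offset $\Delta := |P - p(D)|$, the recursive call in Lines~10--12 is made with offset at most $\Delta/3$. The failures of Lines~2 and~4 provide (i) the coverage contribution of $j$ alone is less than $\Delta$, and (ii) $\benefit{T_c}{D} \leq \Delta$ for every child $c$ of $j$. The greedy while loop processes children by non-increasing benefit and exits with a last child $c_t$, an undershoot $\Delta_{t-1} \geq 0$, and an overshoot $\epsilon_t > 0$ satisfying $\Delta_{t-1} + \epsilon_t = \benefit{T_{c_t}}{D}$; the $\min$-branching of Lines~10--12 passes $\min(\Delta_{t-1}, \epsilon_t)$ to the recursion. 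A naive bound $\benefit{T_{c_t}}{D} \leq \Delta$ only yields a factor-$2$ contraction; sharpening to factor $3$ requires exploiting that $c_t$ has the smallest benefit among the $t$ flipped children, forcing the earlier larger-benefit flips to have consumed at least $2\Delta/3$ of the coverage budget before the final flip. This refined case analysis is where I anticipate the bulk of the technical difficulty.

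With offset contraction in hand, the cost of any ``alternating'' intermediate cover $D$ produced by the algorithm can be estimated by LMP-style accounting: each tight set in $\pruned{C} \cup \pruned{C}^-$ has cost equal to the sum of $y$-values on its elements, each split contributes at most $c_{\max}$ for the added root $j$, and the remaining contributions telescope into $\DL$ plus $\lambda$ times the overshoot above $P$. Hence after $s$ splits, the feasible after-split cover $D^{(s)}$ (first argument of the $\min$ in Line~13) satisfies
\[
c(D^{(s)}) \;\leq\; \DL \;+\; \lambda\, \Delta_s \;+\; (s+1)\, c_{\max} \;\leq\; \Bigl(1 + \tfrac{1}{3^{s}}\Bigr)\DL \;+\; (s+1)\, c_{\max},
\]
where we used the offset contraction $\Delta_s \leq \Delta_0/3^s$ together with $\lambda \Delta_0 \leq \DL$. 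Setting $k = s + 1$ gives $c(D_k) \leq (1 + 1/3^{k-1})\DL + k\, c_{\max}$ for every $k \geq 1$, completing the proof.
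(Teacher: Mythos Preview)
Your overall architecture matches the paper's: track the offset $|P-p(D)|$ through successive splits, argue a geometric contraction, and charge each split root against $c_{\max}$. The cost accounting via \eqref{eq:DL-bound}-style telescoping is also the right idea. However, the heart of your argument---the claim that every split contracts the offset by a factor of~$3$---is not true as stated, and the sketch you give cannot be completed.

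Your argument for factor-$3$ contraction explicitly relies on ``earlier larger-benefit flips'' having consumed at least $2\Delta/3$ of the budget before the last child $c_t$ is processed. This reasoning is vacuous when $t=1$, and in fact the contraction genuinely fails in that case. Concretely, in {\scshape increase} take $\Delta = P - p(D) = 9$, let adding $j$ raise coverage by $g = 4.5$ (so the post-add offset is $\Delta' = 4.5$, and Line~3 still fails), and let the single child have $\benefit{T_{c_1}}{D} = 9$ (Line~4 still fails since $9 \leq \Delta$). After one flip the overshoot is $4.5$ and the undershoot is $4.5$, so the recursion continues with offset $4.5 > \Delta/3$. Your bound $\Delta_s \leq \Delta_0/3^s$ therefore does not follow, and neither does the final inequality. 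For $t \geq 2$ your intuition does go through: if $\min(u,o) > \Delta/3$ then $\beta_t > 2\Delta/3$, hence $\beta_1 > 2\Delta/3$, forcing $u = \Delta' - \beta_1 < \Delta - 2\Delta/3 = \Delta/3$, a contradiction.

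The paper handles the $t=1$ case (which it notes can only arise in {\scshape increase}) by an entirely separate argument: it observes that then $p(D''_q) - b_{j_q} < P$, so the same accounting that gives \eqref{eq:DL-bound} yields directly $c(D''_q \setminus \{j_{\leq q}\}) \leq \DL$, i.e., $c(D''_q) \leq \DL + q\,c_{\max}$ with \emph{no} multiplicative error at all. A similar direct bound covers early termination at Line~3. Only in the remaining case---every one of the first $k$ splits processes at least two children---does the factor-$3$ chain apply. You need to add this case distinction; without it the proof is incomplete.

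A minor point: your initial offset should be $|P - p(D_1)|$, where $D_1$ is the state at the first split (after {\scshape merge} has already flipped some whole trees), not $P - p(\pruned{C}^-)$. The inequality $\lambda\,|P - p(D_1)| \leq \DL$ still holds by Lemma~\ref{lemma:white-coverage} and Lemma~\ref{lemma:alternating}, so this is easily repaired.
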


\begin{proof}
Let us digress for a moment for the sake of exposition. Suppose that in Line 6 of {\sc merge}, instead of calling {\scshape increase}, we return $D'\! =\! D \oplus \subtree{r}$. Notice every arc in the merger graph has exactly one endpoint in $D'$. By Lemma~\ref{lemma:white-coverage}, any element $i$ not covered by $D'$ must have $y_i = \lambda\, p_i$.
Furthermore, if $y_i >0$ then there exists at most one set in $D'$ that covers $i$; if two such sets exist, one must dominate the other in $\vf{y}$ and $\vf{y}^-$, which is not possible. Hence,
\begin{equation} \label{eq:DL-bound}
c(D) = \sum_{j \in D'} \sum_{i: a_{ij}=1} y_i = \hspace{-2ex} \sum_{ \substack{i\ \mathrm{s.t.} \\ \exists\, j \in D' \, :\,  a_{ij} = 1}} \hspace{-2ex} y_i\, = \sum_{i \in U} \ y_i - (p(U)-p(D')) \lambda \leq \DL + (p(D') - P) \lambda 
\end{equation}
In the fortunate case that $(p(D') - P) \lambda \leq k c_{\max}$, the lemma would follow. Of course, this need not happen and this is why we make the call to {\scshape increase} instead of returning $D'$.

Let $j_q$ be the root of the $q^{\mathrm{th}}$ subtree split by {\sc increase/decrease}. Also let $D_q$ the solution right before splitting $\subtree{j_q}$, and $D'_q$ and $D''_q$ be the unfeasible/feasible pair of solutions after the splitting, which are used as parameters in the recursive calls (Lines 11-12). Suppose Lines 7-9 processed only one child of $j_q$, this can only happen in {\sc increase}, in which case $p(D''_q)> P$ but $p(D''_q) - b_{j_q} < P$. The same argument used to derive \eqref{eq:DL-bound} gives us
\begin{equation}
c\left(D''_q \setminus \{j_{\leq q}\}\right) \leq \sum_{i \in U} \ y_i - \left(p(U)-p(D''_q) + b_{j_q}\right) \lambda  \leq \DL  
\end{equation}
The cost of the missing sets is $c(\{j_{\leq q}\}) \leq q\, c_{\max}$, thus if $q\leq k$ the lemma follows. A similar bound can be derived if the recursive call ends in Line 3 before splitting the $k^{\mathrm{th}}$ subtree.
Finally, the last case to consider is when Lines 7-9 process two or more children $j_q$ for all $q \leq k$. In this case
\begin{equation} |p(D_q) - P| \geq 3 \min \left\{ |p(D'_q) -P|, |p(D''_q) -P|\right\} = 3\, |p(D_{q+1}) - P|,
\end{equation}
which implies $|p(D_1) -P| \geq 3^{k-1} |p(D_k) -P|\geq 3^{k-1} | p(D''_k) - P|$. Also, $\lambda ( P - p(D_1)) \leq \DL$ since all elements $i$ not covered by $D_1$ must be such that $y_i = \lambda p_i$. Hence, as before
\begin{equation}
c\left(D''_k \setminus \{j_{\leq k}\}\right) \leq \DL + \lambda \left(p(D''_k) - P\right)  \leq \DL + \lambda \frac{P-p(D_1)}{3^{k-1}} \leq \left(1 + \frac{1}{3^{k-1}}\right) \DL
\end{equation}
Adding the cost of $\{j_{\leq k}\}$ we get the lemma.
\end{proof}


\ifthenelse{\boolean{journal}}{
\subsection{Integrality gap example} \label{section:gap-example}

To finish the proof of Theorem~\ref{theorem:LP-gap} we now show a family of instances of P-TBC exhibiting an integrality gap of $\IP > \left(1 + \frac{1}{3^{k-1}}\right) \LP + \frac{k}{2} c_{\max}$ for large enough $k$

Let $T$ be a rooted tree and $\{ (s_i, t_i) \}_{i=1}^n$ be a collection of pairs of nodes of $T$, each defining a unique path in $T$. Let $A$ be the incidence matrix of paths to edges of $T$. The covering problem defined by $A$ is the well-know Multicut problem where the objective is to find a minimum cost set of edges whose removal separates all pairs. If for every $(s_i,t_i)$ pair $s_i$ is the ancestor of $t_i$ or vice-versa then $A$ is totally balanced.

\newcommand{\edge}[1]{\psline{o-o}#1}
\newcommand{\chosenedge}[1]{\edge{#1}\pszigzag[linewidth=0.5pt,coilwidth=2.5pt,coilarm=2pt]{o-o}#1}
\psset{unit=12pt}

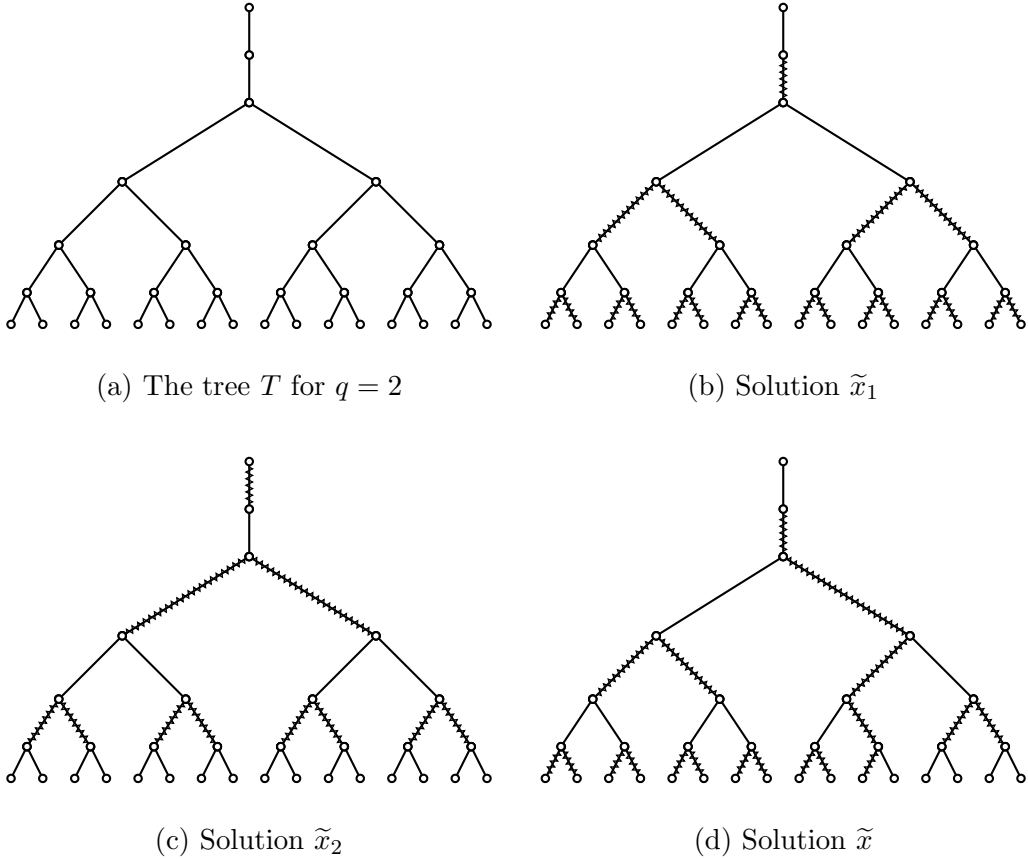
\begin{figure}
\[ \begin{array}{cc}
\begin{pspicture}(0,-2)(16,12)
\rput(7.5,-1){(a) The tree $T$ for $q=2$}
\multirput(0,1)(2,0){8}{\edge{(0,0)(0.5,1)}\edge{(1,0)(0.5,1)}}
\multirput(0.5,2)(4,0){4}{\edge{(0,0)(1,1.5)}\edge{(2,0)(1,1.5)}}
\multirput(1.5,3.5)(8,0){2}{\edge{(0,0)(2,2)}\edge{(4,0)(2,2)}}
\rput(3.5,5.5){\edge{(0,0)(4,2.5)}\edge{(8,0)(4,2.5)}}
\rput(7.5,8){\edge{(0,0)(0,1.5)}\edge{(0,1.5)(0,3)}}
\end{pspicture}
&
\begin{pspicture}(0,-2)(16,12)
\rput(7.5,-1){(b) Solution $\widetilde{x}_1$}
\multirput(0,1)(2,0){8}{\chosenedge{(0,0)(0.5,1)}\chosenedge{(1,0)(0.5,1)}}
\multirput(0.5,2)(4,0){4}{\edge{(0,0)(1,1.5)}\edge{(2,0)(1,1.5)}}
\multirput(1.5,3.5)(8,0){2}{\chosenedge{(0,0)(2,2)}\chosenedge{(4,0)(2,2)}}
\rput(3.5,5.5){\edge{(0,0)(4,2.5)}\edge{(8,0)(4,2.5)}}
\rput(7.5,8){\chosenedge{(0,0)(0,1.5)}\edge{(0,1.5)(0,3)}}
\end{pspicture} \\

\begin{pspicture}(0,-2)(16,12)
\rput(7.5,-1){(c) Solution $\widetilde{x}_2$}
\multirput(0,1)(2,0){8}{\edge{(0,0)(0.5,1)}\edge{(1,0)(0.5,1)}}
\multirput(0.5,2)(4,0){4}{\chosenedge{(0,0)(1,1.5)}\chosenedge{(2,0)(1,1.5)}}
\multirput(1.5,3.5)(8,0){2}{\edge{(0,0)(2,2)}\edge{(4,0)(2,2)}}
\rput(3.5,5.5){\chosenedge{(0,0)(4,2.5)}\chosenedge{(8,0)(4,2.5)}}
\rput(7.5,8){\edge{(0,0)(0,1.5)}\chosenedge{(0,1.5)(0,3)}}
\end{pspicture}
&
\begin{pspicture}(0,-2)(16,12)
\rput(7.5,-1){(d) Solution $\widetilde{x}$}
\multirput(0,1)(2,0){5}{\chosenedge{(0,0)(0.5,1)}\chosenedge{(1,0)(0.5,1)}}
\rput(10,1){\chosenedge{(0,0)(0.5,1)}\edge{(1,0)(0.5,1)}}
\multirput(12,1)(2,0){2}{\edge{(0,0)(0.5,1)}\edge{(1,0)(0.5,1)}}
\multirput(0.5,2)(4,0){2}{\edge{(0,0)(1,1.5)}\edge{(2,0)(1,1.5)}}
\rput(8.5,2){\edge{(0,0)(1,1.5)}\chosenedge{(2,0)(1,1.5)}}
\rput(12.5,2){\chosenedge{(0,0)(1,1.5)}\chosenedge{(2,0)(1,1.5)}}
\rput(1.5,3.5){\chosenedge{(0,0)(2,2)}\chosenedge{(4,0)(2,2)}}
\rput(9.5,3.5){\chosenedge{(0,0)(2,2)}\edge{(4,0)(2,2)}}
\rput(3.5,5.5){\edge{(0,0)(4,2.5)}\chosenedge{(8,0)(4,2.5)}}
\rput(7.5,8){\chosenedge{(0,0)(0,1.5)}\edge{(0,1.5)(0,3)}}
\end{pspicture}
\end{array}\]
\caption{Integrality gap example for $q=2$. In (b-d) the wiggly edges belong to the corresponding solution \label{fig:example-gap}}

\end{figure}

Our tree $T$ is made up of a complete binary tree with height $2 q$ plus a 2-path coming out of the apex of the binary tree going up into the real root of $T$; thus the tree has $2^{2q+1}+1$ nodes. The reader is referred Figure~\ref{fig:example-gap} for a picture of the instance. The cost of every edge is 3. There are two kinds of paths: \emph{internal} and \emph{fringe} paths. For every node in the binary tree there is an internal path of length two coming out of the node going up; there are $2^{2q+1}-1$ such paths each having a profit of $4^q$. For each leaf and the root there is a fringe path of length 1 incident on it; there are $2^{2q}+1$ such paths each having a profit of 2. The target coverage is given by $\overline{P} = 2 \left( 4^{q-1}+\ldots + 4^0 + 1\right) = \frac{2^{2q+1}+ 4}{3}$, where the shorthand notation $\overline{X}$ stands for $p(U) - X$. 

Consider the dual solution $\vf{y}$ where every internal path gets a dual value of 1 and every fringe path gets a dual value of $2$. The solution is feasible for $\lambda = 1$ and has cost
\begin{equation}
\DL = \sum_{i \in U} y_i - \overline{P} \lambda = \frac{10 \, 4^q -1 }{3}
\end{equation}

To show that $\vf{y}$ is optimal, we construct a primal (fractional) solution $\vf{x}$ with the same cost, which is a convex combination of two integral solutions $\vf{\widetilde{x}_1}$ and $\vf{\widetilde{x}_2}$. Let $\vf{\widetilde{x}_1}$ consist of edges in every other level of~$T$ starting at the leaf level and let $\vf{\widetilde{x}_2}$ be the complement of $\vf{\widetilde{x}_1}$. (See Figure~\ref{fig:example-gap}. Note that $\overline{p(\vf{\widetilde{x}_1})} = 2$ and $\overline{p(\vf{\widetilde{x}_2})}= 2^{2q+1}$. Consider the convex combination $\alpha \overline{p(\vf{\widetilde{x}_1})} + \beta \overline{p(\vf{\widetilde{x}_2})} = \overline{P}$ and let $\vf{x} = \alpha \vf{\widetilde{x}_1} + \beta \vf{\widetilde{x}_2}$. Its cost is given by
\begin{eqnarray*}
c(\vf{x}) & = & \alpha c(\vf{\widetilde{x}_1}) + \beta c(\vf{\widetilde{x}_2}) \\
 &= & \alpha \left(\sum_i y_i - \lambda \overline{p(\vf{\widetilde{x}_1})} \right) + \beta \left(\sum_i y_i - \lambda \overline{p(\vf{\widetilde{x}_2})}\right) - \lambda \overline{P} \\
 &= &\DL.
\end{eqnarray*}
Clearly, $\vf{x}$ is a feasible fractional solution. Therefore, it is optimal. 

Let $\vf{\widetilde{x}}$ be the solution defined as follows. For edges incident on a leaf and leave out the $\frac{\overline{P}}{2}-1$ rightmost ones and choose remaining ones. For other edges, choose the edge only if one of the edges immediately below is not chosen. (See Figure~\ref{fig:example-gap}). If we try to pay for $\vf{\widetilde{x}}$ using the dual cost we will charge twice $2q-1$ internal paths whose both edges are chosen in $\vf{\widetilde{x}}$. In other words,
\begin{equation}
c(\vf{\widetilde{x}}) = \DL + 2q-1
\end{equation}

Due to their high profit, internal paths cannot be left uncovered. Using this fact we can infer that $\vf{\widetilde{x}}$ is indeed an optimal integral solution covering $P$ profit. Choosing $k = q\, \log_3 4$ we get the lower bound of Theorem~\ref{theorem:LP-gap}. That is, for large enough $q$,
\begin{equation} c(\vf{\widetilde{x}}) > \left(1+ \frac{1}{3^{k-1}}\right) c(x) + c_{\max} \frac{k}{2}.
\end{equation}

For smaller values of $k$ the slightly weaker bound with $c_{\max}\, \frac{k-5}{2}$ additive error holds. It is worth noting that the example can be adapted to yield the same bound for instances with unit profits.

\section{Applications} \label{sec:applications}

In this section we show how Theorem~\ref{theorem:LP-gap} implies better approximation algorithms for a number of covering problems that can be expressed with a suitable combination of $\rho$ totally-balanced matrices.

\begin{definition} Matrix $B$ is said to be row-induced by a collection of matrices $A_1, \ldots A_k \in R^{n \times m}$ if for all $i$, the $i$th row of $B$ equals the $i$th row of $A_j$ for some $1 \leq j \leq k$.
\end{definition}

\begin{definition} Matrix $A \in \{0,1\}^{n \times m}$ is said to be $\rho$-separable if there exist matrices $A_1, \ldots A_\rho \in \{0,1\}^{x \times m}$ such that  $A = \sum_q A_q$ and every matrix row-induced by $A_1, \ldots, A_\rho$ is totally balanced.
\end{definition}

Our algorithms make us of following lemma to absorb the additive error in our bounds.

\begin{lemma} Let $\myalgo{A}$ be an algorithm for a given partial covering problem $(U, \collection{S}, P)$ that produces a solution with cost at most $\alpha\, \OPT + k \, c_{\max}$, where $\OPT$ is the cost of the optimal solution. Then there exists an $\alpha$-approximation that makes $|U|^{\frac{k}{\alpha-1}}$ calls to $\myalgo{A}$. \label{lemma:absorb}\end{lemma}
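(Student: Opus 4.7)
The natural approach is the standard ``guess the expensive sets'' trick. Set $t = \lceil k/(\alpha-1) \rceil$ and enumerate over every subset $S \subseteq \collection{S}$ with $|S|=t$. For each guess, form a residual partial cover instance by (i) including $S$ in the solution, (ii) replacing $U$ with the elements not covered by $S$ and lowering the target coverage to $P - p(S)$, and (iii) discarding from $\collection{S}$ any set whose cost exceeds $\min_{j \in S} c_j$. Run $\myalgo{A}$ on this residual instance, combine its output with $S$, and at the end return the cheapest feasible cover found across all guesses. The total number of invocations of $\myalgo{A}$ is $\binom{|\collection{S}|}{t} \leq |\collection{S}|^{k/(\alpha-1)}$; under the (standard) assumption that redundant sets have been eliminated so that $|\collection{S}| \leq |U|$, this matches the claimed bound.

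For correctness, fix an optimal partial cover $\collection{C}^*$ and let $S^*$ consist of its $t$ most expensive sets (or all of $\collection{C}^*$ if $|\collection{C}^*|<t$, in which case we have already found an optimum by exhaustively guessing). Consider the iteration in which the algorithm guesses $S = S^*$. The residual instance is well-defined because every set in $\collection{C}^* \setminus S^*$ survives the cost filter (each is cheaper than every set in $S^*$), and $\collection{C}^* \setminus S^*$ is a feasible residual cover, so the residual optimum is at most $\OPT - c(S^*)$. Moreover, the residual $c_{\max}$ is at most $\min_{j \in S^*} c_j \leq c(S^*)/t$.

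Applying the guarantee of $\myalgo{A}$ to the residual instance, combining with $S^*$, and using $t \geq k/(\alpha-1)$ yields
\begin{equation*}
c(S^*) + \alpha\bigl(\OPT - c(S^*)\bigr) + k\cdot\frac{c(S^*)}{t}
\;=\; \alpha\,\OPT - \Bigl((\alpha-1) - \tfrac{k}{t}\Bigr) c(S^*)
\;\leq\; \alpha\,\OPT,
\end{equation*}
which proves the approximation ratio.

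The only subtle step is step (iii)---restricting to sets no more expensive than $\min_{j \in S^*} c_j$. Without this filter, the additive term $k\,c_{\max}$ of the residual run could still be as large as $k\,c_{\max}$ of the original instance, and the whole argument collapses; with it, $c_{\max}$ in the residual is forced down to $O(c(S^*)/t)$, which is exactly the slack needed to cancel the $k\,c_{\max}$ term against the savings $(\alpha-1)c(S^*)$ obtained by pre-paying for the expensive sets. The rest is bookkeeping.
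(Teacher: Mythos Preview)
Your proof is correct and follows essentially the same approach as the paper: guess the $t = k/(\alpha-1)$ most expensive sets of the optimum, filter out all sets more expensive than the cheapest guessed set, run $\myalgo{A}$ on the residual instance, and verify via the same arithmetic that the additive $k\,c_{\max}$ term is absorbed by the savings $(\alpha-1)\,c(S^*)$. The paper's argument is terser but identical in substance; you even flag the $|\collection{S}|$ versus $|U|$ issue that the paper glosses over.
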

\begin{proof}
The idea is to run $\myalgo{A}$ on a modified instance $(U', \collection{S}', P')$. Let $X$ be the $\frac{k}{\alpha-1}$ most expensive sets in the optimal cover for $(U,\collection{S}, P)$. Let $\collection{S}' = \collection{S} \setminus \{\, j\, | \, c_j > \min_{j' \in X} c_{j'} \}$, $U' = U \setminus \{\, i \, | \, \mbox{covered by } X\}$, and $P' = P - p(X)$. The optimal solution in the new instance has cost $\OPT' = \OPT - c(X)$.

Adding $X$ to the solution returned by $\myalgo{A}(U',S',P')$ gives us a feasible solution, for the original instance, with cost at most
\[\alpha\, \OPT' + \, k \, c'_{\max} + c(X) \leq \alpha\, \OPT' + k \, \frac{\alpha - 1}{k} c(X) + c(X) = \alpha\, \OPT.\]

Unfortunately we do not know which sets comprise $X$. Therefore $\myalgo{A}$ is run on every choice of $X$ and the best cover found is returned. The number of calls to $\myalgo{A}$ needed is $|U|$ choose ${\frac{k}{\alpha-1}}$.
\end{proof}

We are ready to describe our approximation algorithms for covering problems that can be described with a \mbox{$\rho$-separable} matrix. We assume the decomposition is given to us. For an arbitrary matrix finding such a decomposition, or even testing for its existence, may be hard. However, for our application problems it is easy to find the $\rho$ matrices using the problem definition.

\begin{theorem} \label{theorem:approx} Let $A$ be $\rho$-separable into matrices $A_1, \ldots, A_\rho$ where $\rho > 1$. For any constant $\epsilon > 0$ there is a $(\rho + \epsilon)$-approximation and a quasi-polynomial time $\rho$-approximation for the partial covering problem defined by $A$.
\end{theorem}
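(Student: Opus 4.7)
The plan is to decompose the Partial Cover instance on $A$ into $\rho$ Partial Totally Balanced Cover subproblems via the $\rho$-separable structure, apply Theorem~\ref{theorem:LP-gap} to each, and take the union of the resulting covers.

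\textbf{Partition induced by OPT.} Let $C^*$ be an optimal integral partial cover, with cost $\OPT$. For each element $i$ covered by $C^*$, fix any single column $j(i) \in C^*$ with $A_{i,j(i)}=1$; because $A_1,\ldots,A_\rho$ are $0/1$ matrices summing to $A$, they partition the $1$-entries of $A$, so there is a unique $q(i)$ with $(A_{q(i)})_{i,j(i)}=1$. Setting $U_q = \{\, i : q(i) = q\,\}$ gives a partition of the elements covered by $C^*$, so $\sum_q p(U_q) \geq P$.

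\textbf{Subproblems.} For each $q$, the restriction of $A_q$ to rows $U_q$ is a submatrix of the (trivially) row-induced matrix $A_q$, and is therefore totally balanced (total balance is hereditary under row/column deletion). Consider the P-TBC instance on this submatrix with target coverage $p(U_q)$: the solution $C^*$ covers all of $U_q$ via $A_q$ by construction, so it is feasible and the subproblem's integral optimum is at most $\OPT$. By Theorem~\ref{theorem:LP-gap}, for any $k \in Z_+$ we obtain a cover $C_q$ with $c(C_q) \leq \bigl(1+\tfrac{1}{3^{k-1}}\bigr)\OPT + k\,c_{\max}$. Because $C_q$ attains the full target $p(U_q)$ within the universe $U_q$, it must cover every element of $U_q$ via $A_q$, hence also via $A$ (since $A_q \leq A$ entry-wise).

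\textbf{Union and parameter tuning.} The union $C = \bigcup_q C_q$ therefore covers all of $\bigcup_q U_q$ in the original instance, with profit $\geq P$, and has cost at most $\rho\bigl(1+\tfrac{1}{3^{k-1}}\bigr)\OPT + \rho k\,c_{\max}$. Picking $k$ constant with $\rho/3^{k-1} \leq \epsilon$ and invoking Lemma~\ref{lemma:absorb} with $\alpha = \rho+\epsilon$ absorbs the $\rho k\,c_{\max}$ additive term into the multiplicative factor in polynomial time, yielding the $(\rho+\epsilon)$-approximation. Taking instead $k = \Theta(\log|U|)$ makes the multiplicative slack $\rho/3^{k-1}$ vanish, and Lemma~\ref{lemma:absorb} now needs $|U|^{O(\log|U|/(\rho-1))}$ calls, giving the quasi-polynomial time $\rho$-approximation.

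\textbf{Main obstacle.} The partition $(U_q)$ is defined in terms of the unknown $C^*$, and a brute-force enumeration over $\rho^{|U|}$ partitions is infeasible. Bridging this gap requires, on top of the expensive-sets enumeration already built into Lemma~\ref{lemma:absorb}, a problem-dependent or LP-guided procedure that enumerates only polynomially (resp.\ quasi-polynomially) many candidate partitions, one of which attains the bounds derived above; this is precisely where the structural ideas borrowed from \cite{GIK02,HS05a,GNS06} do the heavy lifting, using the combinatorial structure of each concrete covering problem (trees for multicut and path hitting, interval systems for rectangle stabbing, $\rho$-blocks for set cover) to narrow the search.
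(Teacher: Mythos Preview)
Your proposal has a genuine gap that you yourself flag but do not close: the partition $(U_q)$ is defined in terms of the unknown optimum $C^*$, and you defer its computation to ``problem-specific enumeration.'' This is not how the paper proceeds, and in fact no enumeration is needed. The paper's key idea is to let the \emph{LP optimum} guide the decomposition: solve the LP relaxation on $A$ to obtain $(x,r)$; for each element $i$, since $\sum_q a_i^q\cdot x = A_i\, x \geq 1 - r_i$, some index $q_i$ satisfies $a_i^{q_i}\cdot x \geq (1-r_i)/\rho$. Take $B$ to be the \emph{single} matrix whose $i$th row is $a_i^{q_i}$. By the definition of $\rho$-separability, $B$ is row-induced and hence totally balanced, and $(\rho x, r)$ is feasible for the P-TBC instance on $B$ with the same target $P$, so $\LP_B \leq \rho\,\LP_A$. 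One invocation of Lemma~\ref{lemma:merge} on this single instance yields a cover of cost at most $(1+3^{1-k})\,\rho\,\LP_A + k\,c_{\max}$, and Lemma~\ref{lemma:absorb} finishes. The references \cite{GIK02,HS05a,GNS06} are borrowed for this LP-scaling/row-selection trick, not for any partition enumeration; the problem-specific structure is used only to exhibit the $\rho$-separable decomposition, which is assumed given in the theorem statement.

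A secondary observation: in your scheme each subproblem has target $p(U_q)$ equal to the total profit of its own universe, so it is a \emph{full} cover instance on a totally balanced matrix. Such instances are solvable exactly (Kolen's algorithm gives $\IP=\LP$), so Theorem~\ref{theorem:LP-gap} is superfluous there; were the partition known, you would get a clean $\rho\,\OPT$ bound with no additive term and no $\epsilon$. This underscores that the entire content of the theorem lies in producing a usable decomposition algorithmically---your scheme does not, whereas the LP-guided row selection does, and it does so generically for any $\rho$-separable $A$.
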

\begin{proof}
Our algorithm is based on the approach of \cite{GNS06,HS05a}. First, we find an optimal fractional solution $(\vf{x},\vf{r})$ for the partial covering problem defined by $A$. Let $\vf{a_i^q}$ be the $ith$ row of $A_q$. Notice that for each $i$ there must exist a $q_i$ such that $\vXv{a_i^{q_i}}{x} \geq \frac{1- r_i}{\rho}$. Second, we construct a matrix $B$ by choosing $\vf{a^{q_i}_i}$ as the $i$th row of $B$. Note that $B$ is totally balanced. Finally, we find a threshold value $\lambda^*$ for $B$ and invoke {\sc merge} to find a cover $D$.

Any feasible solution for $B$ is also feasible for $A$, thus $D$ is a feasible cover for $A$. Note that $(\rho \vf{x},\vf{r})$ is a feasible fractional solution for $B$. Let $\OPT$ be the cost optimal solution for $A$. By Lemma~\ref{lemma:merge} and letting $k \geq \log_3 \frac{\rho}{\epsilon} + 1$ we get,
\begin{equation}
c(D) \leq \left( 1 + \frac{1}{3^{k-1}} \right) c(\rho x) + k\, c_{\max} = \left(1 + \frac{1}{3^{k-1}}\right) \rho\, c(x) + k\, c_{\max} \leq  (\rho + \epsilon) \OPT + k\, c_{\max}
\end{equation}
This solution is $\rho + \epsilon$ approximate with an additive error of $k\, c_{\max}$ that can be absorbed using Lemma~\ref{lemma:absorb}

For the quasi-polynomial time $\rho$-approximation, setting $k \geq \log \rho\, |U| + 1$ we get 
\begin{equation}
c(D) \leq \left(1 + \frac{1}{\rho\, |U|}\right) c(\rho x) + k\, c_{\max} \leq \rho \,c(x) + (k+1)\, c_{\max} \leq \rho\, \OPT + (k+1) \ c_{\max}.
\end{equation}
And the theorem follows.
\end{proof}

This implies improved approximation algorithm for the partial version of Multicut and Path Hitting on Trees, and Rectangle Stabbing. To show this, we use the following fact about totally balanced matrices. Let $T$ be a rooted tree. An $s$-$t$ path in $T$ is said to be \emph{descending} if $s$ is an ancestor of $t$. Let $\mathcal{P}$ and $\mathcal{Q}$ be collections of descending paths in $T$, and let $A$ be the $\mathcal{P}$-$\mathcal{Q}$ incidence matrix $A=\{a_{i,j}\}$, where $a_{i,j} = 1$ if and only if the $i$th path in $\mathcal{P}$ intersects the $j$th path in $\mathcal{Q}$. It is known that $A$ is totally balanced: To put the matrix into Greedy Standard form arrange the columns and rows of $A$ so that the paths in $\mathcal{P}$ and $\mathcal{Q}$ appear in non-increasing distance from the root.

\begin{corollary} There is a $2 + \epsilon$ approximation and a quasi-polynomial time $2$-approximation for Partial Multicut on Trees.
\end{corollary}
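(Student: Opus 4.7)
The plan is to derive the corollary as a direct application of Theorem~\ref{theorem:approx} with $\rho = 2$. The main task is to exhibit the path--edge incidence matrix $A$ of the Partial Multicut on Trees instance as a $2$-separable matrix, so that the $(\rho + \epsilon)$ and quasi-polynomial $\rho$ approximations promised by Theorem~\ref{theorem:approx} specialize to the desired $2 + \epsilon$ and quasi-polynomial $2$ approximations.

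First, I would root the tree $T$ arbitrarily. For each pair $(s_i, t_i)$ let $\ell_i = \mathrm{LCA}(s_i, t_i)$. The unique $s_i$-$t_i$ path in $T$ splits at $\ell_i$ into two descending paths: $P_i^1$ going from $\ell_i$ down to $s_i$ and $P_i^2$ going from $\ell_i$ down to $t_i$ (one of them possibly empty, which causes no issue). I would then define $A_1$ and $A_2$ as the $|U| \times |E(T)|$ matrices whose $i$th rows are the edge-incidence vectors of $P_i^1$ and $P_i^2$ respectively. By construction, $A = A_1 + A_2$ and the $i$th row of each $A_q$ encodes a descending path in $T$.

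Second, I would verify the separability condition, namely that every matrix $B$ row-induced by $A_1, A_2$ is totally balanced. Any such $B$ is the incidence matrix between a collection $\mathcal{P}$ of descending paths (chosen row-by-row from $\{P_i^1, P_i^2\}$) and the edges of $T$, which themselves are descending paths of length one (playing the role of $\mathcal{Q}$). By the fact recalled immediately before the corollary, the $\mathcal{P}$-$\mathcal{Q}$ incidence matrix of two collections of descending paths is totally balanced; arranging both the rows (paths in $\mathcal{P}$) and the columns (edges) in non-increasing distance from the root places $B$ into greedy standard form.

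With $A$ thus shown to be $2$-separable, Theorem~\ref{theorem:approx} immediately yields a $(2 + \epsilon)$-approximation and a quasi-polynomial time $2$-approximation for Partial Multicut on Trees. There is no real obstacle beyond checking the decomposition at the LCA, since everything about the merging procedure, the integrality gap bound, and the absorption of the additive $k\, c_{\max}$ error is already encapsulated in Theorem~\ref{theorem:approx} and Lemma~\ref{lemma:absorb}.
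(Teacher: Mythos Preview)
Your proposal is correct and follows essentially the same approach as the paper: root the tree, split each $s_i$--$t_i$ path at its lowest common ancestor into two descending paths to obtain $A = A_1 + A_2$, observe that any row-induced matrix is a descending-path/edge incidence matrix and hence totally balanced, and then invoke Theorem~\ref{theorem:approx} with $\rho = 2$. The paper's proof is terser, but the underlying decomposition and the appeal to the totally-balanced characterization of descending-path incidence matrices are identical.
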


\begin{proof} 
The input of Partial Multicut is a tree $T$ and a collection of paths $\mathcal{P}$ in $T$, the problem is defined by the $\mathcal{P}$-$E[T]$ incidence matrix $A$. Even though a path in $\mathcal{P}$ may not be descending, we can always split such a path into two separate descending paths. Therefore, $A$ is 2-separable.
\end{proof}

\begin{corollary} There is a $4 + \epsilon$ approximation and a quasi-polynomial time $4$-approximation for Path Hitting on Trees.
\end{corollary}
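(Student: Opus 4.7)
The plan is to cast Partial Path Hitting on Trees as a partial covering problem and to show that its incidence matrix $A$ is $4$-separable; Theorem~\ref{theorem:approx} then immediately yields both the $(4+\epsilon)$-approximation and the quasi-polynomial time $4$-approximation.

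Concretely, the input of Partial Path Hitting consists of a rooted tree $T$ and two collections of paths $\mathcal{P}$ and $\mathcal{Q}$; the task is to pick a minimum-cost subset of $\mathcal{P}$ whose union hits at least $P$ of the paths in $\mathcal{Q}$. Form the $\mathcal{Q}$--$\mathcal{P}$ incidence matrix $A$ with $a_{ij}=1$ iff $Q_i \cap P_j \neq \emptyset$. In contrast with Partial Multicut, both the rows (paths in $\mathcal{Q}$) and the columns (paths in $\mathcal{P}$) are possibly non-descending, which is exactly why the factor is expected to grow from $2$ to $4$: two separate splittings will be needed.

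I would split each path at its lowest common ancestor, writing $Q_i = Q_i^1 \cup Q_i^2$ and $P_j = P_j^1 \cup P_j^2$ with each half descending. Whenever $a_{ij}=1$, at least one of the four pairs $(a,b) \in \{1,2\}^2$ satisfies $Q_i^a \cap P_j^b \neq \emptyset$; assigning each such entry $(i,j)$ to one such class produces four $0/1$ matrices $A_{(1,1)}, A_{(1,2)}, A_{(2,1)}, A_{(2,2)}$ whose entry-wise sum equals $A$.

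The principal obstacle is to verify that every matrix row-induced by these four matrices is totally balanced. A row-induced $B$ is specified by choosing $(a_i, b_i)$ per row $i$, and $B_{ij}=1$ only if $Q_i^{a_i} \cap P_j^{b_i} \neq \emptyset$; hence each row of $B$ lies inside an incidence between descending paths on both sides. Invoking the fact recalled just before Corollary~1---that incidences between two collections of descending paths in a rooted tree are totally balanced under the ordering by decreasing distance from the root---I would argue that $B$ does not contain the forbidden induced submatrix~\eqref{eq:forbidden-matrix}. Care is required here because different rows may select different halves of the same column, so a single column ordering has to be made compatible with all of the row choices simultaneously. Once $4$-separability is in hand, Theorem~\ref{theorem:approx} with $\rho=4$ delivers both claimed approximation guarantees.
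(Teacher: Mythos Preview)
Your approach differs from the paper's and the gap you yourself flag is real, not a detail to be filled in later. In a row-induced matrix $B$ built from your four pieces $A_{(a,b)}$, row $i$ comes with a choice $(a_i,b_i)$; the entry $B_{ij}$ then refers to the half $P_j^{b_i}$ of column~$j$, and this half varies with~$i$. Hence $B$ is \emph{not} the incidence matrix of one family of descending paths against another: column $j$ is being viewed as two different descending paths by different rows, so the fact recalled before the corollary does not apply. Even in the lucky case where all rows pick the same $b$, your $B$ is only entrywise $\leq$ the true descending-vs-descending incidence matrix (because each $1$ of $A$ was assigned to a single quadrant), and entrywise minors of totally balanced matrices need not be totally balanced---for instance the all-ones $3\times 3$ matrix is totally balanced, but deleting three $1$'s can produce the $3$-cycle matrix, which is not. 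So neither step of your intended argument goes through, and $4$-separability of the original matrix is left unproved.

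The paper sidesteps this entirely by \emph{not} claiming $4$-separability. It first transforms the instance: each hitting path in $\mathcal{Q}$ (the sets) is replaced by its two descending halves, each keeping the original cost; an observation of Parekh and Segev shows this at most doubles the optimum. In the modified instance the columns are now fixed descending paths, so splitting each demand path into two descending halves gives a bona fide $2$-separation: any row-induced matrix is the incidence of the descending paths $\{P_i^{a_i}\}$ against the fixed descending columns, hence totally balanced by the cited fact. Applying Theorem~\ref{theorem:approx} with $\rho=2$ to the modified instance yields a $(2+\epsilon)$-approximation (respectively a quasi-polynomial $2$-approximation), and combining with the factor~$2$ from the splitting gives $4+\epsilon$ (respectively~$4$) for the original problem.
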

\begin{proof}
The input of Partial Path Hitting is a tree $T$ and two collections of paths $\mathcal{P}$ and $\mathcal{Q}$ in $T$, the covering problem is defined by the $\mathcal{P}$-$\mathcal{Q}$ incidence matrix $A$. \citet{PS06} noted that if we split each path in $\mathcal{Q}$ into two descending paths the cost of the optimal solution increases by at most a factor of 2. The matrix of this modified problem is 2-separable.
\end{proof}

Our last application problem is Rectangle Stabbing. This is a special case of Set Cover with $\rho$-Blocks, a broad class of covering problems introduced by \citet{HS05a}, where the incidence matrix defining the problem is such that every row has $\rho$ blocks of contiguous~\mbox{1's}.

\begin{theorem} Let $A$ be a matrix defining an instance of Set Cover with $\rho$-Blocks. Then there exists a polynomial time $\rho$-approximation algorithm for the partial covering problem defined by $A$.
\end{theorem}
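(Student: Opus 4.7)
The plan is to combine the $\rho$-separability framework of Theorem~\ref{theorem:approx} with the stronger structure of interval matrices, which are totally unimodular, in order to eliminate both the $\epsilon$ slack and the quasi-polynomial enumeration.

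First decompose $A = A_1 + \cdots + A_\rho$ where the $i$-th row of $A_q$ is the $q$-th maximal block of consecutive ones of the $i$-th row of $A$. Each $A_q$ is an interval matrix, and any matrix row-induced by $A_1,\ldots,A_\rho$ is itself an interval matrix, which is totally balanced (indeed totally unimodular). Thus $A$ is $\rho$-separable, and the LP-rounding step used in the proof of Theorem~\ref{theorem:approx} carries over: solve the partial cover LP on $A$ to obtain $(\vec x,\vec r)$ of cost $\LP$, let $q_i$ be a block index with $\vec a_i^{q_i}\cdot\vec x \geq (1-r_i)/\rho$, and build the interval matrix $B$ whose $i$-th row is $\vec a_i^{q_i}$. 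The pair $(\rho\vec x,\vec r)$ is feasible for the partial cover LP on $B$, so $\LP(B)\leq\rho\,\LP\leq\rho\,\OPT$.

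The crux is to produce an integer solution to the partial cover problem on $B$ of cost $\LP(B)+O(c_{\max})$ in polynomial time. I would do this by running Kolen's primal-dual algorithm on $B$ in the Lagrangian framework and invoking the merging procedure of Section~\ref{section:merging}. The key observation is that, because $B$ is totally unimodular (a consecutive-ones matrix), the prize-collecting LP on $B$ is integral for every $\lambda$, and the merger graph of $\pruned{C}^-$ and $\pruned{C}$ has an especially simple shape: each out-branching has bounded depth, so the \textsc{increase}/\textsc{decrease} recursion terminates after splitting only a constant number of subtrees. Consequently the $(1+1/3^{k-1})$ multiplicative overhead of Lemma~\ref{lemma:merge} collapses to $1$ for interval matrices, and the merged solution has cost at most $\LP(B) + O(c_{\max}) \leq \rho\,\OPT + O(c_{\max})$. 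Finally invoking Lemma~\ref{lemma:absorb} with $\alpha=\rho$ and $k=O(1)$ absorbs the additive term in polynomial time, using at most $|U|^{O(1/(\rho-1))}$ enumerations (polynomial for $\rho\geq 2$; the case $\rho=1$ is a plain interval cover problem and is polynomial directly).

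The main obstacle is the step where the multiplicative overhead of Lemma~\ref{lemma:merge} has to be argued away: one must show that when $B$ is an interval matrix the merger graph is so constrained (branchings of bounded size arising from disjoint shifts between $\pruned{C}^-$ and $\pruned{C}$) that no ``while loop'' of the merge algorithm ever splits more than $O(1)$ subtrees, so that the bound $c(D)\leq(1+1/3^{k-1})\DL+k\,c_{\max}$ sharpens to $c(D)\leq\DL+O(c_{\max})$. This requires a refinement of the analysis in Section~\ref{section:merging} tailored to the total-unimodularity of $B$, and is the only place where the argument departs from the already-developed machinery.
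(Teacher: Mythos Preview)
Your overall architecture matches the paper's proof exactly: reduce to an interval matrix $B$ via the LP-rounding of Theorem~\ref{theorem:approx}, argue a sharpened version of Lemma~\ref{lemma:merge} for $B$, and then absorb the additive error with Lemma~\ref{lemma:absorb}. The only substantive step---and the one you flag as the obstacle---is the structural property of the merger graph for interval matrices, and here your conjecture is off.

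You claim each out-branching has \emph{bounded depth}, and that this forces only $O(1)$ splits. Neither statement is correct. The merger graph for an interval matrix can be a long chain, so depth is unbounded; and bounded depth alone would not limit the number of splits anyway, since a shallow tree with high out-degree could still trigger many iterations of the while loop. The paper's actual observation is that for a matrix whose every row is a single block of consecutive ones, the merger graph is a \emph{path}: each vertex has out-degree at most one. (Sketch: if $j_1$ dominated both $j_2<j_3<j_1$ via elements $i_2,i_3$ with positive dual, then by the interval structure $i_2$ also lies in $j_3$, so $j_3$ would dominate $j_2$---impossible since both survive reverse-delete.) With out-degree one, the while loop in {\sc increase} processes a single child, which is precisely the ``one child'' case in the proof of Lemma~\ref{lemma:merge}; that case yields $c(D)\le \DL + c_{\max}$ directly, with no multiplicative loss and exactly one split. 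Plugging $k=1$ into Lemma~\ref{lemma:absorb} then gives the polynomial $\rho$-approximation as you outlined.
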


\begin{proof} We proceed as in Theorem~\ref{theorem:approx} to reduce $A$ to a matrix $B$. This new matrix is not only totally balanced, but each row consists of a single block of consecutive 1's. For such matrices the merger graph used in Section~\ref{section:merging} is in fact a path. In this case, at most one subtree is split in the execution of {\sc increase}. Therefore we get the stronger guarantee that $\IP \leq \LP + c_{\max}$. Plugging in this into the proof of Theorem~\ref{theorem:approx} gives the desired result.
\end{proof}

\begin{corollary} There is a 2-approximation for Partial Rectangle Stabbing and a $d$-approximation for Partial $d$-dimensional Rectangle Stabbing.
\end{corollary}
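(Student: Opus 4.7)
The plan is to recognize Partial $d$-dimensional Rectangle Stabbing as a special case of the partial covering problem treated in the preceding theorem, and then apply that theorem verbatim. First I would fix the natural incidence-matrix formulation: rows are rectangles, columns are the candidate axis-aligned stabbing hyperplanes, grouped into $d$ blocks of columns, one group per coordinate axis. Within each axis group, I would order the hyperplanes by their coordinate along that axis. A hyperplane orthogonal to axis $q$ stabs a rectangle $R$ if and only if its $q$-coordinate lies in the projection of $R$ onto axis $q$.

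The key observation is that this projection is an interval, so within the axis-$q$ column block the $1$'s in the row corresponding to $R$ are contiguous. Hence every row of the incidence matrix $A$ consists of exactly $d$ blocks of consecutive $1$'s---one block per coordinate axis---so $A$ is a Set Cover with $d$-Blocks matrix in the sense of the preceding theorem.

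From this the corollary falls out immediately: the preceding theorem yields a polynomial-time $d$-approximation for the partial covering problem defined by $A$, which is exactly Partial $d$-dimensional Rectangle Stabbing. Specializing to $d=2$ gives the stated $2$-approximation for Partial Rectangle Stabbing in the plane. There is no real obstacle here; the only thing to verify is the block structure of the incidence matrix, which follows from the fact that the intersection of a $d$-dimensional box with any coordinate axis is an interval.
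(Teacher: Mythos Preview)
Your proposal is correct and matches the paper's approach: the paper states the corollary without proof, having already noted just before the $\rho$-Blocks theorem that Rectangle Stabbing is a special case of Set Cover with $\rho$-Blocks. You simply fill in the easy verification that the rectangle--hyperplane incidence matrix has at most $d$ blocks of consecutive $1$'s per row (one per axis, coming from the interval projections), after which the preceding theorem applies directly with $\rho=d$.
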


}{}

\section{Concluding remarks and open problems}

The results in this paper suggest that Lagrangian relaxation is a powerful technique for designing approximation algorithms for partial covering problems, even though the black-box approach may not be able to fully realize its potential.

It would be interesting to extend this study on the strengths and limitation of Lagrangian relaxation to other problems. The obvious candidate is the $k$-Median problem. \citet{JV01} designed a $2\alpha$-approximation for $k$-Median using as a black box an $\alpha$-LMP approximation for Facility Location. Later, \citet{JMMSV03} gave a 2-LMP approximation for Facility Location. Is the algorithm in \cite{JV01} optimal in the sense of Theorem~\ref{theorem:lowerbound}? Can the algorithm in \cite{JMMSV03} be turned into a 2-approximation for $k$-Median by exploiting structural similarities when combining the two solutions?

\ifthenelse{\boolean{journal}}{The class of totally unimodular matrices is undoubtedly the most important subclass of balanced matrices. An open problem is to establish good approximations for Partial Totally Unimodular Cover (P-TUC). The matrix used in Section~\ref{section:gap-example} is also totally unimodular, so the lower bound on the integrality gap applies for P-TUC as well. Does the upper bound of Theorem~\ref{theorem:LP-gap} also hold P-TUC?}{}

\vspace{0.5cm}
\noindent{\bf Acknowledgments:} I am indebted to Danny Segev for sharing an early draft of \cite{KPS06} and for pointing out Kolen's work. Also thanks to Mohit Singh and Arie Tamir for helpful discussions and to Elena Zotenko for suggesting deriving the result of Section~\ref{section:lowerbound}.

\small

\bibliographystyle{abbrvnat}
\ifthenelse{\boolean{journal}}{\bibliography{references,conferences}}{\bibliography{references,conferences-short}}

\ifthenelse{\boolean{journal}}
{
\normalsize
\appendix

\section{Partial Totally Unimodular Cover} \label{appendix:lowerbound-TU}

\begin{theorem} \label{theorem:lowerbound-TU} Partial totally unimodular cover cannot be approximated better than $\frac{4}{3}$ using Lagrangian relaxation and a 1-LMP algorithm $\myalgo{A}$ as a black box.
\end{theorem}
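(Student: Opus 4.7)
The plan is to adapt the lower-bound construction of Section~\ref{section:lowerbound} to the totally unimodular setting, specialized to $\alpha=1$.

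First, I would observe that setting $\alpha=1$ in Lemma~\ref{lemma:structure-prize-collecting} gives the four candidate prize-collecting costs $\lambda(q^3+2q)$, $\frac{2}{3}+2q\lambda$, $\frac{4}{3}$, and $1+q\lambda$; these all coincide at $\lambda=\frac{1}{3q}$ and elsewhere are strictly dominated by either the $A$- or the $B$-family. A $1$-LMP algorithm may therefore consistently return the $A$-family for $\lambda\le\frac{1}{3q}$ and the $B$-family for $\lambda>\frac{1}{3q}$ while never outputting an $O$-set, and the merging argument at the end of Section~\ref{section:lowerbound} shows that every combination of $A$- and $B$-sets feasible for $P=q^3+q$ has cost at least $\frac{4}{3}$.

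Next, I would make the element-set incidence matrix totally unimodular. The obstruction is that every cluster element lies in some $A_i\cap B_j\cap O_k$, producing the forbidden $3\times 3$ submatrix $J_3-I_3$ (determinant $\pm 2$) on the columns $\{A_i,B_j,O_k\}$. To remove this I would replace each cluster element $e\in A_i\cap B_j\cap O_k$ by two half-profit sub-elements, $e'\in A_i\cap O_k$ and $e''\in B_j\cap O_k$, so that no element lies simultaneously in an $A$-set and a $B$-set. After the replacement every row has at most one $1$ among the $\{A_i\}\cup\{B_i\}$ columns and at most one $1$ among the $\{O_k\}$ columns, so the incidence matrix is the vertex-edge incidence matrix of a bipartite graph (with dangling edges for the right endpoints of the $B_i$'s) and hence totally unimodular.

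The delicate step is that the split changes the uncovered-profit vector of each family to $U_A=\frac{q^3}{2}+2q$, $U_B=\frac{q^3}{2}$, $U_O=q$, so the original prices no longer make $A$, $B$ and $O$ tie at a common prize-collecting threshold. I would handle this by keeping $c_A=\frac{2}{3}$ and $c_O=1$ and then solving the two linear equations that force $A$ to tie with $O$ and $B$ to tie with $O$ at a single $\lambda^*=\frac{2}{3q(q^2+2)}$, which yields $c_B=\frac{2}{3}+\frac{4}{3(q^2+2)}$. Under this repricing the $O$-family is the unique single-family feasible cover, of cost $1$; it is optimal prize-collecting only at $\lambda^*$; and any combination of $A$- and $B$-sets feasible for $P$ requires essentially all of both families, giving total cost $c_A+c_B\to\frac{4}{3}$ as $q\to\infty$, from which Theorem~\ref{theorem:lowerbound-TU} follows.
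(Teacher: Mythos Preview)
Your element-splitting does produce a totally unimodular matrix, but the repricing step contains a fatal error: you claim that the $O$-family is optimal prize-collecting only at $\lambda^*$, yet after the split the uncovered profit of the $O$-family, $U_O = q$, is by far the smallest of the three (now $U_A = q^3/2+2q$ and $U_B = q^3/2$), so once the three lines meet at $\lambda^*$ the $O$-line lies \emph{strictly below} both $A$ and $B$ for every $\lambda > \lambda^*$, not just at $\lambda^*$. Concretely, for $\lambda$ just above the Empty--$O$ crossing $\frac{1}{q(q^2+1)}$ every $O$-free cover---which by the post-split disjointness of $A$- and $B$-sets must be one of $\emptyset$, $A$, $B$, or $A \cup B$---has strictly larger prize-collecting cost than the $O$-family. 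Since a $1$-LMP algorithm must return an \emph{optimal} prize-collecting cover, it is forced to output $O$-sets for such $\lambda$, and the black-box lower-bound argument collapses. This cannot be salvaged by further repricing: the interval where Empty dominates $O$ ends at $\Theta(1/q^3)$ while the interval where $A\cup B$ dominates $O$ begins at $\Theta(1/q)$, and neither $A$ nor $B$ alone can bridge the gap because each now leaves $\Theta(q^3)$ profit uncovered.

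The paper sidesteps the problem by a different modification: it keeps the $A$- and $B$-sets unchanged and redefines $O_i := B_i \setminus \{\text{rightmost element}\}$. Then $O_i \subset B_i$, the uncovered profits $(\overline{p(A)}, \overline{p(B)}, \overline{p(O)}) = (2q, 0, q)$ are exactly as in the original construction, and Lemmas~\ref{lemma:structure-prize-collecting} and~\ref{lemma:naughty-LMP} carry over verbatim for $\alpha=1$, so $O$ is never uniquely optimal. Total unimodularity is verified via Ghouila-Houri's equitable-bicoloring criterion: in any column submatrix, color every $A$-column blue; for each $i$ with both $B_i$ and $O_i$ present give them opposite colors, otherwise color the one present red. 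Every element lies in at most one $A_j$ and in at most one pair $\{B_i, O_i\}$, so the coloring is equitable.
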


\begin{proof}
The instance is similar to that used in Theorem~\ref{theorem:lowerbound}: The $A$-sets and the $B$-sets are the same; for each $i$ we define $O_i$ as $B_i$ minus the rightmost element. The cost of each $A$, $B$ and $O$ set is $\frac{2}{3}$, $\frac{4}{3}$ and 1 respectively. The target coverage parameter is again $P= q (q^2 +1)$.

It is straightforward to check that Lemmas~\ref{lemma:structure-prize-collecting} and~\ref{lemma:naughty-LMP} still holds for our new instance and $\alpha=1$. Then the same argument used in the proof of Theorem~\ref{theorem:lowerbound} gives us a lower bound of $\frac{4}{3}$.

It only remains to show that the resulting element-set incidence matrix $A$ is totally unimodular. A matrix $A$ is totally unimodular if and only if every submatrix $A'$ of $A$ has an equitable coloring \cite{G62}. An equitable coloring of a 0,1 matrix $A'$ is a partition of its columns into red and blue columns such that in every row of $A'$ the number of blue 1's and red 1's differs by at most one. Let us construct an equitable coloring for $A'$: all the $A$-sets are colored blue; for each $i$, if $B_i$ and $O_i$ are present in $A'$ then color one red and the other blue, and if only one is present then color it red. Clearly the coloring is equitable; thus, $A$ is totally unimodular.
\end{proof}

\section{Finding a threshold value} \label{section:thresholdvalue}

The idea is to use parametric search treating $\lambda$ as an unknown which lies in a certain range $(\lambda_l, \lambda_r)$. Initially $\lambda_l=0$ and $\lambda_r=\max_{i,j} \frac{c_j}{p_i}$. Residual capacities and dual variables are kept as a linear function of $\lambda$. We maintain the invariant that $\lambda_l^+$ separates less than $P$ profit and $\lambda_r^-$ covers more than $P$ profit. Suppose that in the interval $(\lambda_l, \lambda_r)$ the algorithm \emph{agrees} on the first $i$ elements. By this we mean that if we run the algorithm with any value $\lambda \in (\lambda_l, \lambda_r)$ the value of the dual variables of these $i$ elements (as a function of $\lambda$) is always the same. In each iteration we either find a threshold value or we narrow the interval such that the algorithm agrees on one more element, while maintaining the invariant. This cannot go on forever because the algorithm will eventually behave the same throughout the interval and the invariant would be violated. If at some point along the way we find a value of $\lambda$ covering exactly $P$ profit we stop as the solution is optimal. For simplicity, from now on we assume that this never happens.

\begin{figure}[h]

\psset{unit=1.25cm}
\begin{center}
   \begin{pspicture}(0,0.5)(6,3.5)
   \psline[arrows=->](1,1)(1,2.7)
   \psline[arrows=->](1,1)(5,1)
   \psline(1.5,1.4)(3.4,2.7)
   \psline(1.6,1.8)(4.2,2.5)
   \psline(2,2.4)(4.4,1.8)
   \psline(2.8,2.8)(4.4,1.4)
   \psline[linestyle=dotted](1.7,1.5)(1.7,1)
   \psdots[dotstyle=o](1.7,1)
   \rput(1.7,0.5){$\lambda_l$}
   \psline[linestyle=dotted](2.4,2)(2.4,1)
   \psdots[dotstyle=o](2.4,1)
   \rput(2.4,0.5){$\lambda_1$}
   \psline[linestyle=dotted](2.95,2.1)(2.95,1)
   \psdots[dotstyle=o](2.95,1)
   \rput(2.95,0.5){$\lambda_2$}
   \psline[linestyle=dotted](3.77,1.95)(3.77,1)
   \psdots[dotstyle=o](3.77,1)
   \rput(3.77,0.5){$\lambda_3$}
   \psline[linestyle=dotted](4.25,1.5)(4.25,1)
   \psdots[dotstyle=o](4.25,1)
   \rput(4.25,0.5){$\lambda_r$}
   \rput(5.4,0.8){$\lambda$}
   \rput(0.8,3.2){$c'(\cdot)$}
 \end{pspicture}
 \end{center}

 \caption{Narrowing the interval for $\lambda$. \label{figure:narrowing-interval}}
\end{figure}
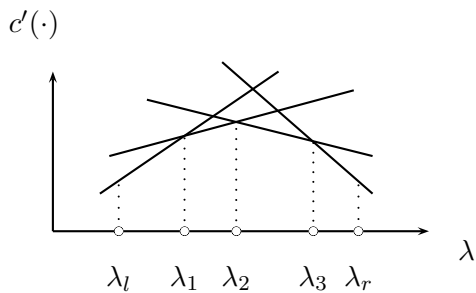

Suppose that {\scshape kolen} agrees on the first $i-1$ elements in the interval $(\lambda_l, \lambda_r)$. Note that the residual costs and $i$'s penalty are linear functions of $\lambda$. As a result, which set has the minimum residual cost, and thus which one becomes tight, if any, varies with $\lambda$. Our goal is to narrow the interval such that the set that becomes tight is always the same, or $y_i = p_i \lambda$ within the new interval. If we draw the lines corresponding to the residual costs of set that $i$ belongs to and $p_i \lambda$, the segments on the lower envelope correspond to the next tight event, either a set or element $i$. See Figure~\ref{figure:narrowing-interval}. Let $\lambda_1, \ldots, \lambda_s$ correspond to the intersection points of the lower envelope, and let $\lambda_0 = \lambda_l$ and $\lambda_{s+1} = \lambda_r$. For every $0 \leq a \leq s$, within the interval $(\lambda_a, \lambda_{a+1})$ the algorithm agrees on $i$. Note that either one of the $\lambda_a$ is a threshold value, or there exists an $a$ such that $\lambda_a^+$ covers less than $P$ profit and $ \lambda_{a+1}^-$ covers more than $P$ profit. Given the latter we update $\lambda_l = \lambda_a$ and $\lambda_r = \lambda_{a+1}$ and repeat.

\begin{theorem} A threshold value can be found by making $O\big(|U| \log |\collection{S}|\big)$ calls to {\sc kolen}.
\end{theorem}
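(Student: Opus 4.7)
The plan is to bound the two nested sources of work separately: an outer loop that extends the ``agreement region'' one element at a time, and an inner search that locates the right sub-interval among the breakpoints of a lower envelope.

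First I would argue that the outer loop runs at most $|U|$ times. Each narrowing step replaces $(\lambda_l,\lambda_r)$ by a strictly interior open sub-interval on which {\sc kolen} behaves identically through one more element than before; since agreement is monotone and can last through at most $|U|$ elements, the loop terminates in $n=|U|$ iterations (or earlier, if some $\lambda_a$ happens to be a threshold value directly, in which case we stop).

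Second, inside a fixed outer iteration the candidate breakpoints $\lambda_1<\ldots<\lambda_s$ are the abscissae of the lower envelope of at most $|\collection{S}|+1$ lines: one residual-cost line $c_j-\sum_{i'<i:a_{i'j}=1} y_{i'}(\lambda)$ for each set $j$ containing~$i$, plus the penalty line $p_i\lambda$. All of these are affine in $\lambda$ by Lemma~\ref{lemma:tiny-difference}-style induction, so $s\le|\collection{S}|$ and the envelope (and hence the sorted list of breakpoints) can be computed in polynomial time with zero calls to {\sc kolen}. To identify the correct sub-interval I would binary search on the breakpoints: for a tentative $\lambda_a$ one call to {\sc kolen}$(A,c,p,\lambda_a)$, interpreted with the $\lambda^+$/$\lambda^-$ linear-function bookkeeping, reports whether the profit covered is below, above, or exactly equal to~$P$. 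The exact case yields a threshold value directly; otherwise the sign of $p(\pruned{C})-P$ tells us in which half of the breakpoints the invariant ``$\lambda_l^+$ covers less, $\lambda_r^-$ covers more'' is preserved, and we recurse. This costs $O(\log |\collection{S}|)$ calls per outer iteration.

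Multiplying the two bounds gives $O(|U|\log|\collection{S}|)$ calls to {\sc kolen}, as claimed. The main obstacle I anticipate is purely bookkeeping: one must verify that on each open sub-interval produced by the binary search {\sc kolen} really does agree on one more element (so that the outer-loop progress measure strictly increases), and that the linear-function representation of $\vf{y}(\lambda)$ and the residual costs can be updated in polynomial time between iterations. Both follow by the same inductive argument that underlies Lemma~\ref{lemma:tiny-difference}, applied to the moving endpoints rather than to an infinitesimal perturbation, so no new conceptual ingredient is needed beyond the parametric-search template of Megiddo~\cite{M78}.
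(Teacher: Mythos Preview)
Your proposal is correct and follows essentially the same approach as the paper: both bound the outer loop by $|U|$ (one element of agreement gained per narrowing) and the inner search by $O(\log|\collection{S}|)$ via binary search over the $\lambda_a$ breakpoints of the lower envelope, with one call to {\sc kolen} per probe. The paper's own proof is in fact terser than yours; it simply asserts that binary search over the breakpoints costs $\log|\collection{S}|+1$ calls and refers back to the preceding discussion for the outer-loop bound.
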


\begin{proof}
The above discussion outlines the algorithm for finding a threshold value. Regarding the time complexity, when searching for the next tight event, instead of trying every $\lambda_a$ value, we can use binary search. Thus, only $\log |\collection{S}| + 1$ calls to { \scshape kolen} are needed to find the right value of $a$ to narrow the interval.
\end{proof}

}{}

\end{document}